
\documentclass[sigconf, nonacm]{acmart}

\newcommand\vldbdoi{XX.XX/XXX.XX}
\newcommand\vldbpages{XXX-XXX}
\newcommand\vldbvolume{14} 
\newcommand\vldbissue{1}
\newcommand\vldbyear{2020}
\newcommand\vldbauthors{\authors}
\newcommand\vldbtitle{\shorttitle} 

\newcommand\vldbpagestyle{plain}

\usepackage{rotating}
  \usepackage{lscape} 
    \usepackage{amsthm}
    \usepackage{amsmath}
    \usepackage{amsfonts}
    \usepackage{stmaryrd}  

    \usepackage{bm}

    \usepackage{url}
    \usepackage{hyperref}
  \usepackage{xargs}

    \newtheorem{proprep}{Proposition}
    \newtheorem{proptyrep}{Property}
    \newtheorem{definrep}{Definition}

    \theoremstyle{definition}

\newcommand{\mudag}{RLQDAG}

\newcommand{\muenum}{\texttt{MuEnum}}

\newcommand{\eqnode}[1]{\hm{\bigr[} #1 \hm{\bigr]}}

\newcommand{\antiprojection}{\widetilde{\pi} }

\newcommand{\strict}{\texttt{rigid}}
\newcommand{\destab}{\texttt{destab} }
\newcommand{\dee}{$\texttt{destab}$}
\newcommand{\deriv}[2]{\texttt{d}(#1, #2)}

\newcommand{\CASE}[1]{\STATE \textbf{case} #1\textbf{:} \begin{ALC@g}}
  \newcommand{\ENDCASE}[1]{\textbf{end case}\end{ALC@g}}

  \newcommand{\evaluation}[1]{\texttt{eval}({#1})}
   
  \newcommand{\pf}[1]{\text{pf}(#1)}
  \newcommand{\pa}[1]{\text{pa}(#1)}
  \newcommand{\pj}[1]{\text{pj}(#1)}
  \newcommand{\mf}[1]{\text{mf}(#1)}
  \newcommand{\pp}[1]{\text{pp}(#1)}
  \newcommand{\expand}[1]{\texttt{expand}(#1)}
  \newcommand{\expandR}[2]{\texttt{expand}_{#1}(#2)}
  
  \newcommand{\applyAll}[1]{\texttt{applyAll}(#1)}
  \newcommand{\allCodd}[1]{\texttt{allCodd}(#1)}
  \newcommand{\pfj}[1]{\text{pfj}(#1)}
  \newcommand{\jassoc}[1]{\text{jassoc}(#1)}
  \newcommand{\dju}[1]{\text{dju}(#1)}
  \newcommand{\paj}[1]{\text{paj}(#1)}
  
  \newcommand{\filtColumns}[1]{\text{\emph{filt(#1)}}}
  \newcommand{\type}[1]{\text{\emph{type}}{(#1)}}
  \newcommand{\recursiveNode}[1]{\eqnode{#1}_{\mathfrak{D}}^{\mathfrak{R}}}
  \newcommand{\recursiveNodeAnot}[3]{\eqnode{#1}_{\mathfrak{#2}}^{\mathfrak{#3}}}

  \newcommand{\sd}[1]{S_d\llbracket #1 \rrbracket}
  \newcommand{\salpha}[2]{S_{\alpha}\llbracket #1 \rrbracket_{#2}}
  \newcommand{\sgamma}[2]{S_{\gamma}\llbracket #1 \rrbracket_{#2}}

\newcommand\sbullet[1][.5]{\mathbin{\vcenter{\hbox{\scalebox{#1}{$\bullet$}}}}}

\newcommand{\subst}[3]{ {#1}_{\{{#2}/{#3}\}}}

\let\phi\varphi
\let\othcirc\circ
\def\circ{\mathbin\othcirc}
\newcommand*{\filt}[2][\mathfrak f]{{\sigma_{#1}\left(#2\right)}}

\newcommand*{\rename}[3]{\rho_{\mathsf{#1}}^{\mathsf{#2}}\left(#3\right)}

\newcommand*{\fixpt}[2][X]{\mu {\left(#1=#2\right)}}

\newcommand{\NJoin}{\bowtie}

\newcommand{\filtaf}[2]{\sigma_{#1}({#2})}
\newcommand{\renameaf}[3]{\rho_{#1}^{#2}({#3})}


\newcommand{\syntaxdef}{\mathrel{::=}}

\newcommand{\syntaxtable}[1]{
  \def\entry##1[##2]##3[##4]{
    {##1} & \syntaxdef & \hspace{3cm} & \!\!\!\! \mbox{##2}
  \\    &     & {##3} & \mbox{##4} }
  \def\singleentry##1[]##2[##3]{
  {##1} & \syntaxdef & {##2} & \!\!\!\! \mbox{##3} }
  \def\oris##1[##2]{
    \\    & |   & {##1} & \mbox{##2} }
  \def\orisopt##1[##2]{
    \\ \left(   & |   & {##1} & \mbox{##2} \right) }
  \def\linekot##1[##2]{
      \\  & -- & ----------- & -----
   }
  \def\linestart[##1]##2{ 
   \\ \mbox{##1} & \hspace{3cm} & \!\!\!\! & {##2} 
  }
  \begin{array}{rcll}
  #1
  \end{array}
  }
\newcommand{\smallsyntax}[1]{\[\syntaxtable{#1}\]}
\newcommand{\consistent}[4]{\texttt{cons}(#1,#2)_{#3}^{#4}}


\newcommand{\replaceparam}{\text{\emph{rep}}}

\newcommand{\pushedf}{\text{\emph{\textcolor{blue}{pushed}}}}
\newcommand{\unpushedf}{\text{\emph{\textcolor{black}{unpushed}}}}
\newcommand{\unpushedftwo}{\text{\emph{\textcolor{black}{unpushed}}}_2}
\newcommand{\expandfalphatwo}{\text{\emph{\textcolor{black}{expand}}}_{\alpha_2}}

\newcommand{\pushedp}{\text{\emph{\textcolor{blue}{pushed}}}}
\newcommand{\unpushedp}{\text{\emph{\textcolor{black}{unpushed}}}}
\newcommand{\unpushedptwo}{\text{\emph{\textcolor{black}{unpushed}}}_2}
\newcommand{\expandpalphatwo}{\text{\emph{\textcolor{black}{expand}}}_{\alpha_2}}

\newcommand{\unaryop}[1]{\text{op}_\text{u}(#1)}
\newcommand{\binaryop}[2]{\text{op}_\text{b}(#1, #2)}
\newcommand{\alphasub}{\alpha_\text{sub}}
\newcommand{\wellformed}{(WF)}
\newcommandx{\fixptdag}[5][1=X, 4=\mathfrak{D}, 5=\mathfrak{R}]{\ensuremath{\mu (#1.#2\cup\eqnode{#3}_{#4}^{#5})}}
\newcommand{\consppty}[1]{\texttt{consistent}(#1)}
\newcommandx{\anneqnode}[3][2=\mathfrak{D}, 3=\mathfrak{R}]{\ensuremath{\eqnode{#1}_{#2}^{#3}}}
\newcommand{\pushedfb}{\text{\emph{pushed}}}
\newcommand{\condpj}{\text{\emph{cond}}_\text{pj}}
\newcommand{\wfexpandp}{(R1)}
\newcommand{\pconsexpand}{(R2)}

\begin{document}
\title{Efficient Enumeration of Recursive Plans in Transformation-based Query Optimizers}


\author{Amela Fejza}
\affiliation{%
  \institution{Tyrex team,
  Univ. Grenoble Alpes, CNRS, Inria,
  Grenoble INP, LIG}
  \city{Grenoble}
  \postcode{38000}
  \country{France} 
}
\email{amela.fejza@inria.fr}

\author{Pierre Genev\`es}
\affiliation{%
  \institution{Tyrex team,
  Univ. Grenoble Alpes, CNRS, Inria,
  Grenoble INP, LIG}
  \city{Grenoble}
  \postcode{38000}
  \country{France}
}
\email{pierre.geneves@inria.fr}

\author{Nabil Laya\"ida} 
\affiliation{%
  \institution{Tyrex team,
  Univ. Grenoble Alpes, CNRS, Inria,
  Grenoble INP, LIG}
  \city{Grenoble}
  \postcode{38000}
  \country{France}
}
\email{nabil.layaida@inria.fr}

\begin{abstract}
  Query optimizers built on the transformation-based Volcano/Cascades framework are used in many database systems. Transformations proposed earlier on the logical query dag (LQDAG) data structure, which is key in such a framework, are restricted to recursion-free queries. We propose the recursive logical query dag (\mudag{}) which extends the LQDAG with the ability to capture and transform recursive queries, leveraging recent developments in recursive relational algebra. Specifically, this extension includes: (i) the ability of capturing and transforming sets of recursive relational terms thanks to (ii) annotated equivalence nodes used for guiding  transformations that are more complex in the presence of recursion; and (iii) \mudag{} rewrite rules that transform sets of subterms in a grouped manner, instead of transforming individual terms in a sequential manner; and that (iv) incrementally update the necessary annotations. Core concepts of the \mudag{} are formalized using a syntax and formal semantics with a particular focus on subterm sharing and recursion.
  The result is a clean generalization of the LQDAG transformation-based approach, enabling more efficient explorations of plan spaces for recursive queries. An implementation of the proposed approach shows significant performance gains compared to the state-of-the-art.
\end{abstract}

\maketitle


\pagestyle{\vldbpagestyle}
\begingroup\small\noindent\raggedright\textbf{PVLDB Reference Format:}\\
\vldbauthors. \vldbtitle. PVLDB, \vldbvolume(\vldbissue): \vldbpages, \vldbyear.\\
\href{https://doi.org/\vldbdoi}{doi:\vldbdoi}
\endgroup
\begingroup
\renewcommand\thefootnote{}\footnote{\noindent
This work is licensed under the Creative Commons BY-NC-ND 4.0 International License. Visit \url{https://creativecommons.org/licenses/by-nc-nd/4.0/} to view a copy of this license. For any use beyond those covered by this license, obtain permission by emailing \href{mailto:info@vldb.org}{info@vldb.org}. Copyright is held by the owner/author(s). Publication rights licensed to the VLDB Endowment. \\
\raggedright Proceedings of the VLDB Endowment, Vol. \vldbvolume, No. \vldbissue\ %
ISSN 2150-8097. \\
\href{https://doi.org/\vldbdoi}{doi:\vldbdoi} \\
}\addtocounter{footnote}{-1}\endgroup


\section{Introduction}
\label{section:introduction}

Recursive queries enable powerful information extraction, especially from linked data structures such as trees and graphs.  
However, important data management system components, such as the widely used Volcano framework \cite{volcano-1993}, were designed for recursion-free queries.
A typical transformation-based query optimizer operates by (i) translating a query into a relational algebraic term, (ii) applying algebraic transformations in order to search for equivalent yet more efficient evaluation plans, during a so-called \emph{plan enumeration phase}, (iii) executing the query by running one of the explored plans. %
Works on extending relational algebra (RA) with recursion \cite{abiteboul-1995,bancilhon-ramakrishnan-recursion} resulted in powerful recursive relational algebras \cite{LFP-RA,alpha-RA,geneves-sigmod20}, capable of capturing queries with transitive closures \cite{alpha-RA} and even more general forms of recursion \cite{LFP-RA,geneves-sigmod20}. This line of works recently led to $\mu$-RA \cite{geneves-sigmod20} which provides a rich set of rewrite rules for recursive terms enabling efficient evaluation plans not available with earlier approaches.
The enumeration phase is crucial as it may produce terms which are drastically more efficient. 
It has been well-studied for recursion-free queries. 
Allocating a time budget for this enumeration phase is common, as it is notoriously known that exhaustive plan space explorations may not be feasible in practice for certain queries.
The faster we generate the space of equivalent plans, the more likely we will be able to find more efficient plans.
With recursion, plan spaces are often significantly larger than in the non-recursive setting, due to new interplays between recursive and non-recursive operators. The efficiency of recursive plan enumeration becomes critical. 
Plan enumeration speed directly determines whether query evaluation plans theoretically enabled by e.g. $\mu$-RA \cite{geneves-sigmod20} are within range of a practical query optimizer or not.
This motivates the search for efficient recursive plan enumeration methods, as they unlock potential for big performance gains, and are possibly decisive for the feasibility of certain queries.

\paragraph*{Contributions}
We present the \underline{R}ecursive \underline{L}ogical \underline{Q}uery \underline{Dag}, \mudag{}, which extends Volcano's LQDAG \cite{volcano-1993} and the $\mu$-RA framework \cite{geneves-sigmod20} for the purpose of efficiently enumerating recursive query plans. %
Contributions include (i) the first extension of the LQDAG with the support of recursive terms; (ii) a formalization of important \mudag{} concepts in terms of formal syntax and semantics, with a particular focus on the sharing of common subterms in the presence of recursion; (iii) \mudag{} transformations with incremental annotation updates. These transformations generalize rewrite rules from individual recursive terms (such as those of \cite{geneves-sigmod20}) to \emph{grouped transformations of compactly represented sets of recursive terms}. 
This enables much more efficient explorations of recursive plan spaces, which in turn makes available in practice very efficient evaluation plans unmatched by previous techniques. %
For this to be possible, the \mudag{} relies on a concept of annotated equivalence nodes with incremental updates, used for guiding transformations of recursive subterms. Contributions also include (iv) a complete implementation of the proposed approach and experimental assessments using third-party queries on synthetic and real datasets.

\section{Background and Related Work}
\label{section-related_work}

\newcommand{\cleanpara}[1]{\textbf{{\underline{#1}}}}

\label{sec:datalog}
Recursion is considered in only a small fraction of the numerous works on query optimization.
Three main lines of work with recursive queries can be identified.

\cleanpara{The Datalog line of works} \cite{datalog-1, datalog-2,datalog-3,datalog-4,datalog-5,datalog-6,datalog-7,datalog-8} developed methods for optimizing recursive queries formulated in Datalog: magic-sets ~\cite{datalog-magic-sets,datalog-magic-sets-2,datalog-opt}, demand transformations~\cite{demand-driven-datalog-2011}, automated reversals \cite{right-left-linear-datalog}, and the FGH rule \cite{fgh-rule-datalog-2022}. 
     Although the syntax of Datalog greatly differs from RA, the effects of magic-sets \cite{datalog-magic-sets,datalog-magic-sets-2,datalog-opt} and of demand transformations~\cite{demand-driven-datalog-2011} are comparable to pushing certain kinds of selections and projections. These techniques are very sensitive to whether the Datalog program is written in a left-linear or right-linear manner, but one can use the automated reversal technique \cite{right-left-linear-datalog} to fully exploit them. 
    The framework proposed in \cite{fgh-rule-datalog-2022} gathers magic-sets, semi-naive evaluation and proposes a new FGH rule for optimizing recursive Datalog programs with aggregations. 

    Datalog engines do not explore plan spaces but use heuristics to find a good plan to evaluate queries.
    However, currently, no matter which combination of existing Datalog optimizations a Datalog engine implements, it will not be able to find plans where recursions have been merged automatically similar to those found by the $\mu$-RA approach \cite{geneves-sigmod20}. This is because, currently, in a Datalog program corresponding to the optimized translation of A$^+$/B$^+$ at least one of the two transitive closures A$^+$ or B$^+$ will be fully materialized (even if there is no solution to A$^+$/B$^+$). On real datasets, this can make Datalog query evaluation an order of magnitude slower than query evaluation with RA-based systems, as noticed in \cite{geneves-sigmod20}.

  \cleanpara{The line of works based on relational algebra} \cite{codd-ra,LFP-RA,lfp-ra-extension,alpha-RA,geneves-sigmod20} attempts to extend relational algebra with operators to capture forms of recursion. For instance, $\alpha$-RA \cite{alpha-RA} extends RA with an operator to capture transitive closures. LFP-RA \cite{LFP-RA} proposes a more general least fixpoint operator, and an extension of this work gave effective criteria for optimization in its presence~\cite{lfp-ra-extension}. Recently, $\mu$-RA~\cite{geneves-sigmod20} proposed to extend RA with a $\mu$ operator which is also a fixpoint with an appropriate set of restrictions. This enables $\mu$-RA to combine all earlier RA-based recursion optimization rules in the same framework~\cite{geneves-sigmod20}, while adding new rewrite rules for recursive terms, in particular for merging recursions. This makes $\mu$-RA the most advanced system for RA-based recursive query optimization as it can generate plans not reachable by other approaches. The approach that we propose extends $\mu$-RA with a new method for enumerating recursive plans much more efficiently, thanks to a generalization of the rules presented in $\mu$-RA so that the generalized rules (presented in Section~\ref{section: mudag-adapted-rewrite-rules}) apply directly on a factorized representation of the recursive plan space. Among the benefits, this enables (1) applying transformations on sets of algebraic terms at once (instead of successively on individual terms), and (2) exploiting the sharing of common subterms to avoid redundant computations.

  \cleanpara{Ad-hoc optimizations for regular queries.}
      Both Datalog and RA-based approaches can capture queries with expressive forms of recursion, going beyond regularity. Several works have focused on optimizing queries in which recursion is restricted to regular patterns, such as regular path queries (RPQs). 
    Automata based approaches have been developed to answer RPQ queries~\cite{rpq-automata,dataguides-automata,rpl-automata}. A hybrid approach that combines finite state machines and $\alpha$-RA is presented in \cite{waveguide-2015} and extended in~\cite{tasweet-2017}. All these works are limited to RPQs and their unions or conjunctions. In comparison, the work we present in this paper supports more expressive forms of recursion, that may include non-regular patterns (see e.g. the experimental section~\ref{section:experimental_results} with queries of the form $A^nB^n$ for instance).


\cleanpara{Limits of the RA-based approach.}
The RA line of work offers several advantages including high expressivity and rich plan spaces. In addition, it can be seen as an interesting approach for extending the (non-recursive) RA approach already widely used in RDBMS implementations. One main limit however, is that the whole approach critically depends on the ability to quickly enumerate plans during the plan exploration phase. While plan enumeration has not been studied yet in the presence of recursion, it has been extensively studied for recursion-free queries.

\vspace{-0.2cm}
\subsection{Plan enumeration for non-recursive queries} 
Most of the works on plan enumeration focus on select-project-join (SPJ) queries. 
Techniques proposed for SPJ can be divided into two main groups: bottom-up and top-down approaches. Bottom-up approaches \cite{systemR-1979,r-star-query-pro,starburst-grammar,starburst-sigmod,egg,eq-saturation,spores,tensor-optim} generate the plan space by starting from the leaves (initial relations) and going up in the tree of operators when progressively exploring alternate combinations of operators. In contrast, top-down approaches \cite{exodus-1987,volcano-1993,cascades-1995} start from the root and recursively explore subbranches in search for possible alternatives. %
%
%
An advantage of the top-down approach is that it enables branch and bound pruning~\cite{shapiro-top-down-better}.





All the previous approaches focus mainly on SPJ queries, with some
extensions to support outer joins~\cite{fender-moerkotte-13-vldb,fender-moerkotte-13-icde,outerjoin-extension1,outerjoin-extension2,outerjoin-extension3} and aggregations~\cite{aggregation-extension}. %
Very few works consider other operators, as noticed in \cite{surajit-overview-98} and \cite{shanbhag-vldb14}. 
To the best of our knowledge, plan enumeration for transformation-based query optimizers has not been studied yet in the presence of recursive operators. %
%

Union and recursion greatly extend the expressive power of SPJ queries. However, they not only make plan enumeration significantly more complex, but they also generate significantly larger plan spaces. This is because their addition generates many new possible combinations to be explored due to new interplays, for instance between unions and joins (e.g. distributivity of natural join over union) or between recursions and joins. This worsens the combinatorics of plans to be enumerated and motivates even more the interest of finding efficient techniques for enumerating recursive plans.

\subsection{The logical query dag (LQDAG)} 

The method that we propose extends a key component used in the top-down enumeration approach: the logical query dag (LQDAG). The LQDAG is a directed acyclic graph data structure used to represent and generate the logical plan space. It was introduced in~\cite{volcano-1993} and improved in~\cite{cascades-1995}. It is also well described as the AND-OR-DAG in~\cite{prasan-2000,shanbhag-vldb14} where it is used for detecting and unifying common subexpressions for multi-query optimization \cite{prasan-2000}; and for generating the space of cross-product free join trees \cite{shanbhag-vldb14}.

The LQDAG contains nodes of two different types: \emph{equivalence} nodes and \emph{operation} nodes. Equivalence nodes can only have operation nodes as children and vice versa: operation nodes can only have equivalence nodes as children. The purpose of an equivalence node is to explicitly regroup equivalent subterms. An operation node corresponds to an algebraic operation like: join ($\bowtie$), filter ($\sigma_{\theta}$) etc.  The LQDAG can be seen as a factorized representation of a set of terms. %
Inspired from~\cite{prasan-2000}, Figure~\ref{fig:mudag-expanded-joins} illustrates a sample LQDAG and its expansion obtained after the application of commutativity and associativity rules for the join operator.  
\begin{figure}[h]
    \centering
    \includegraphics[width=0.5\linewidth]{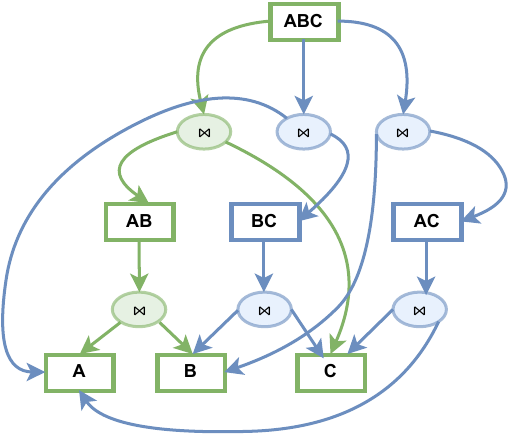}
    \vspace{-0.3cm}
    \caption{Sample LQDAG (in green) with expansion (in blue).}
    \label{fig:mudag-expanded-joins} 
  \end{figure} 

  \vspace{-0.4cm}

\section{The \mudag{}}

The \mudag{} introduces a novel representation designed to effectively capture and transform sets of recursive terms. This representation enables the delineation of subclasses within recursive terms, based on shared properties. This segmentation facilitates the grouping of similar terms, streamlining the process of collectively transforming them in a single operation. %

\label{sec:mudag-syntax}

\vspace{-0.1cm}\subsection{Syntax}
First of all, we propose a syntax for the \mudag{}. The purpose is to be able to syntactically express a term that denotes a (potentially very large) set of recursive algebraic terms. This makes it possible to develop transformations of sets of terms formally (i.e. with a high level of precision), and express them as rewrite rules that transform one \mudag{} term $d$ into another \mudag{} term $d'$. %
The syntax of \mudag{} terms, given in Fig.~\ref{fig:syntax_muLQDAG}, focuses on formalizing the concepts of equivalence nodes, operation nodes, sharing of common subterms, and recursion. %
\begin{figure}[h]
  \vspace{-0.25cm}
  \begin{footnotesize}\smallsyntax{\entry {\gamma}  [{Pointer to equivalence node}] { \hm{\bigr[} \alpha \hm{\bigr]} } [Equivalence node]
  \oris {\text{ Y }} [Reference] \\
  \\ 
  \entry {\alpha}  [{Equivalence node internals}] { d } [Operation node]
  \oris {d, \alpha} [Operation nodes] 
  \oris {\text{let Y = $\alpha_1$ in $\alpha_2$}} [Reference binder] \\
  \\   
  \entry {d} [{Operation node}] 
  X  [Relation variable]
  \oris \renameaf{a}{b}{{\gamma}} [Rename]
  \oris \filtaf{f}{{\gamma}} [Filter] 
  \oris {\gamma} \bowtie {\gamma'} [Join]
  \oris {\gamma} \triangleright {\gamma'} [Antijoin]
  \oris {\gamma} \cup {\gamma'} [Union]
  \oris \antiprojection_{a}{({\gamma})} [Antiprojection] 
  \oris \mu X. \thinspace \gamma \cup \alpha_{rec} [Fixpoint (recursion)] \\
  \\
    \singleentry {\alpha_{rec}}  [] {\eqnode{\alpha}_{\mathfrak{D}}^{\mathfrak{R}}} [{Annotated equivalence node}]
 } \end{footnotesize}%
 \vspace{-0.4cm}\caption{Syntax of \mudag{} terms.}
    \label{fig:syntax_muLQDAG} \vspace{-0.4cm}
        \end{figure}
 
        An equivalence node is a node that can have several operation nodes $d$ as children, possibly with binders. The binder construct 
        ``$\text{let}~Y~=~\alpha_1~\text{in}~\alpha_2$'' enables the explicit sharing of a common equivalence node $\alpha_1$ within the branches of another equivalence node $\alpha_2$. For that purpose, it assigns a new fresh reference name $Y$ to $\alpha_1$, and allows $Y$ to be used multiple times in $\alpha_2$ as a reference to $\alpha_1$. Hence, the general definition of $\gamma$ is either an equivalence node $\eqnode{\alpha}$ or a reference $Y$ to an existing equivalence node.
        
        Operation nodes are defined by the variable $d$ in the abstract syntax of Fig.~\ref{fig:syntax_muLQDAG}. They include the main algebraic operations of recursive relational algebra. Each operand of an operation node  $d$ points in turn to an equivalence node (through $\gamma$). %
        The rename operator $\rename{a}{b}{\gamma}$ renames column $a$ into column $b$ in the equivalence node 
        $\gamma$.
        The filter operator $\filtaf{f}{\gamma}$ applies the filtering expression $f$ to the equivalence node $\gamma$. 
        The antiprojection operator $\antiprojection_{a}{({\gamma})}$ removes column $a$ from the equivalence node $\gamma$.
 
For example, with this syntax, the LQDAG of Fig.~\ref{fig:mudag-expanded-joins} is written as 
the term $[ [A \bowtie B] \bowtie C]$ before expansion and is written $[[A \bowtie B] \bowtie C, A \bowtie [B \bowtie C], B \bowtie [A \bowtie C]]$ after expansion, 
where for the sake of readability we omit brackets around relation variables.

        Recursive \mudag{}s can be expressed using fixpoint operation nodes. The principle, inspired from earlier works in recursive relational algebras \cite{LFP-RA,geneves-sigmod20} and generalized here to sets of terms, consists in the introduction of a least-fixpoint binder operation node ($\mu$) that binds a fresh variable $X$ to some expression in which  $X$ can appear, thus explicitly denoting recursion. %
        Our generalization is defined in the syntax of Fig.~\ref{fig:syntax_muLQDAG} and illustrated in Fig.~\ref{fig:lqdag_recursion}. A fixpoint operator node is written $\mu X.~ \gamma ~ \cup ~ \alpha_{rec}$. The operand $\gamma$ is an equivalence node that models the constant part (the base case) of the recursion. $X$ cannot occur within $\gamma$. The equivalence node $\alpha_{rec}$ is the recursive part. An essential consideration is that the $\alpha_{rec}$ branch contains at least one free occurrence of the recursive variable $X$. This characteristic distinguishes the fixpoint operation node from other operation nodes. It will lead to a number of new definitions and formal developments. Intuitively, this is because depending on how the recursive variable is used in that branch, transformation and sharing of \mudag{} subterms may, or may not, be allowed.

For example, on the Yago graph dataset \cite{yagooo}, the query $\mathcal{Q}_{e_1}$:
\vspace{-0.05cm}\begin{center}$\begin{small} ?s,~?t~~ \leftarrow ~~ ?s~~ \texttt{isLocatedIn+} ~~?t\end{small}$\end{center}
\noindent retrieves all pairs ($s,t$) of source and target nodes connected by a path composed of a sequence of edges labeled \begin{small}\texttt{isLocatedIn}\end{small} (transitive closure).
  The following \mudag{} term $\Sigma$ corresponds to $\mathcal{Q}_{e_1}$:
    \begin{center}$\begin{small}\mu X.~ \eqnode{\texttt{isLocIn}} ~ \cup ~ \eqnode{\antiprojection_{m}\eqnode{{\renameaf{t}{m}{\eqnode{\texttt{isLocIn}}} \bowtie \renameaf{s}{m}{\eqnode{X}}}}}_{\mathfrak{D}}^{\mathfrak{R}}\end{small}$
    \end{center}
    It makes recursion explicit using the fixpoint operator. 
    The equivalence node for the constant part contains the relation variable $\begin{small}\texttt{isLocIn}\end{small}$ whose column names are $s$ and $t$. 
    The equivalence node for the recursive part is composed of a join between the recursive variable $X$ with the relation variable $\begin{small}\texttt{isLocIn}\end{small}$. Here, the path traversal is performed from right to left, by introducing a temporary column name ``$m$'' and renaming the columns so that the natural join is performed on the only common column ``$m$'' before ``$m$'' is discarded by the antiprojection.

Fig.~\ref{fig:lqdag_recursion} illustrates the \mudag{} of $\mathcal{Q}_{e_1}$ with two recursive subterms $\Sigma$ and $\Sigma'$. Notice that $\Sigma'$ is semantically equivalent to $\Sigma$ and encodes the left to right direction of traversal using a different column renaming in the recursive part.
        \begin{figure}[h]
         \centering
         \includegraphics[width=1\linewidth]{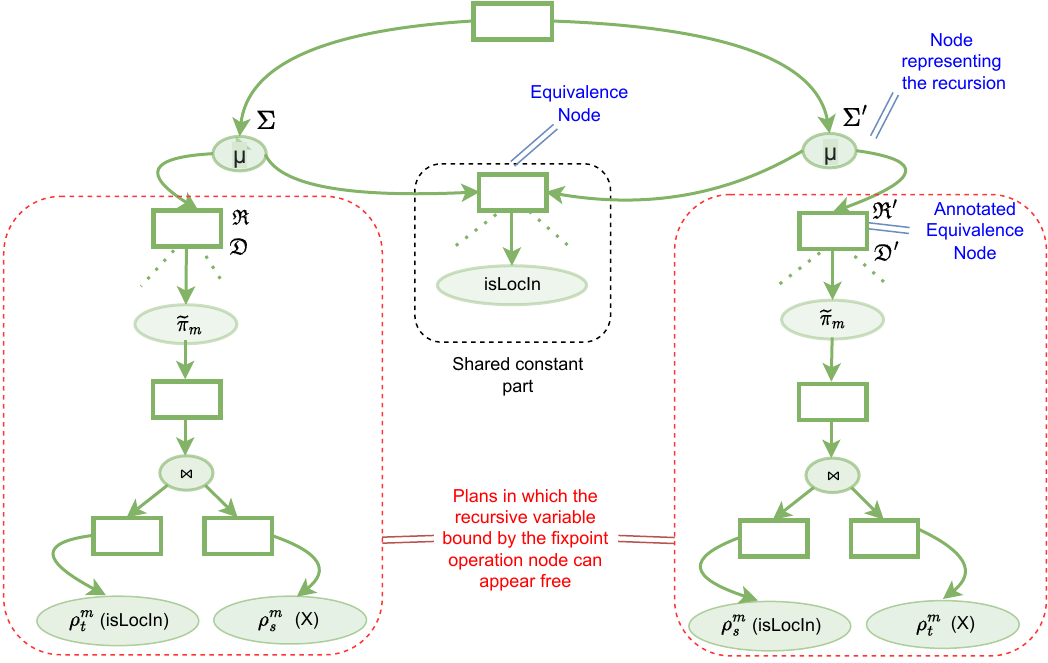}
         \caption{Structure of recursive terms in \mudag{}.}
         \vspace{-0.2cm}
         \label{fig:lqdag_recursion} 
         \vspace{-0.1cm}
       \end{figure} 
 
 The recursive equivalence node of each fixpoint operation node is annotated with  $\mathfrak{D}$ and $\mathfrak{R}$. These annotations will be key for guiding the application of transformations (see Section~\ref{section:criteria-definition}). 
              
       There are two reasons why we distinguish $\alpha_{rec}$ from a general equivalence node $\alpha$ in the abstract syntax. The first reason is that equivalence nodes for recursive parts are equipped with annotations (see Section~\ref{section:annotated_alternative}.). The second reason is that we want to allow a maximum level of sharing while preventing the sharing of subterms with free occurences of a recursive variable. We thus forbid the use of the binding construct to share subterms with free variables.

 We consider the following restrictions over the abstract syntax presented in Fig.~\ref{fig:syntax_muLQDAG}: we consider only positive, linear and non mutually recursive \mudag{} terms:
 (i) positive means that recursive variables only appear in the left-hand operand of an antijoin operator node;
  (ii) linear means that one of the operands in a join or antijoin operator node is constant in the free variable;
  (iii) non-mutually recursive terms means that fixpoint operator nodes are properly nested so that there is only one free variable in any subbranch of an annotated equivalence node (this variable may occur several times).
 These restrictions define a subset of \mudag{} terms that simplifies the theory while supporting expressive queries containing union, conjunction, transitive closure of arbitrary expressions and non-regular patterns such as $A^nB^n$.

\subsection{Semantics of \mudag{} terms}
\label{sec:mudag-semantics}
The interpretation of a \mudag{} term is the set of all recursive relational algebraic terms that it represents. Formally, the semantics of a \mudag{} $[\alpha]$ is a set of $\mu$-RA terms as defined by the functions $\salpha{}{}$ and $\sgamma{}{}$ presented in 
Fig.~\ref{fig:semantics_muLQDAG}, where $E$ denotes a variable environment used to keep track of the variable definitions introduced by binders for the sharing of subterms, and $E \oplus \{ Y \mapsto  \alpha_1 \}$ denotes the environment $E$ in which variable $Y$ is bound to $\alpha_1$. The interpretation of a \mudag{} $[\alpha]$ is $\salpha{\alpha}{\emptyset}$. %
\begin{figure}[h]
  \centering\begin{footnotesize}\renewcommand\arraystretch{1.2} 
$\begin{array}{rcl}
  \sgamma{ [ \alpha ] }{E} & = & \salpha{\alpha}{E}\\
  \sgamma{ Y }{E} & = & \salpha{E(Y)}{E}\vspace{0.3cm}\\
  \salpha{d}{E} & = & \sd{d}  \\
 \salpha{d, \alpha}{E} & = & \sd{ d }{} \cup \salpha{ \alpha }{E}   \\
 \salpha{ \text{let}~ Y = \alpha_1 ~\text{in}~ \alpha_2}{E} & = & \salpha{\alpha_2}{E \oplus \{ Y \mapsto  \alpha_1 \}}~ \vspace{0.3cm}\\
  \sd{X}  & = & X \\
   \sd{\sigma_f(\gamma)} &  = & \{~ \sigma_f(t)  ~|~  t  \in  \sgamma{\gamma}{E}\}  \\ 
 \sd{\gamma_1 \bowtie \gamma_2} & = & \{ ~ t \bowtie t' ~|~ t \in  \sgamma{\gamma_1}{E} ~\land~ t' \in  \sgamma{\gamma_2}{E}   \} \\  \sd{\gamma_1 \triangleright \gamma_2}{} & = & \{~ t  \triangleright  t' ~|~  t \in  \sgamma{\gamma_1}{E} ~\land~ t' \in  \sgamma{\gamma_2}{E} \} \\ 
 \sd{\gamma_1 \cup \gamma_2}{} & = & \{ ~ t \cup t'    ~|~    t \in  \sgamma{\gamma_1}{E} ~\land~ t' \in  \sgamma{\gamma_2}{E}  \} \\
 \sd{\renameaf{a}{b}{\gamma}}{} & = & 
  \{~  \renameaf{a}{b}{t} ~|~  t  \in  \sgamma{\gamma}{E}\} \\
\sd{\antiprojection_{a}({\gamma})}{} & = & 
  \{~  \antiprojection_{a}({t})  ~|~  t  \in  \sgamma{\gamma}{E}  \} \\
 \sd{\mu X.~ \gamma ~ \cup ~ \alpha_{rec}}{} & = & 
  \{  \mu X. t \cup  t_{rec} ~|~ t  \in  \sgamma{\gamma}{E} ~\land~ t_{rec} \in {\salpha{\alpha_{rec}}{E}} \}
\end{array}$\end{footnotesize}\caption{Formal semantics of \mudag{} terms, with one interpretation function for each syntactic construct of Fig.~\ref{fig:syntax_muLQDAG}. }\label{fig:semantics_muLQDAG}
\end{figure}

    A well-formed \mudag{} is a \mudag{} whose interpretation is a set of semantically equivalent terms. %
    For example, Fig.~\ref{fig:well-formed-rlqdag} illustrates a well-formed  \mudag{} capturing two semantically equivalent relational terms obtained before and after join distributivity over union. Fig.~\ref{fig:not-well-formed-rlqdag} illustrates a   \mudag{} which is not well-formed, since its top-level equivalence node contains two subterms that are not semantically equivalent.
    \begin{figure}[H]
      \begin{minipage}[t]{0.45\linewidth}
        \begin{center}
        \includegraphics[width=0.9\linewidth]{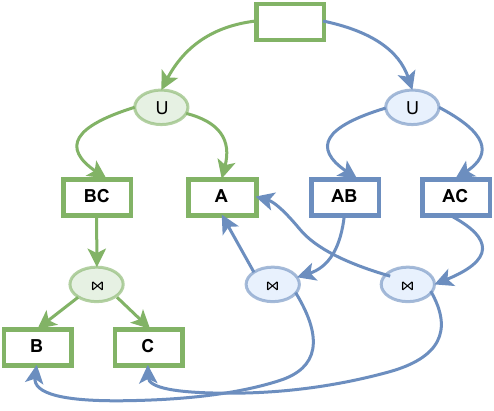}
        \vspace{-0.1cm}
        \caption{Well-formed.}
        \label{fig:well-formed-rlqdag}
        \end{center}
      \end{minipage}
        \hfill    
      \begin{minipage}[t]{0.5\linewidth}
          \begin{center}
        \includegraphics[width=0.8\linewidth]{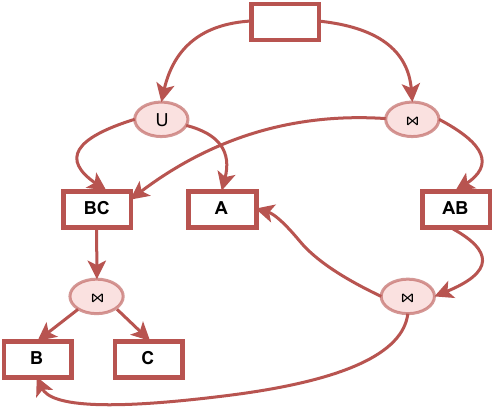}
        \vspace{-0.1cm}
        \caption{Not well-formed.}
        \label{fig:not-well-formed-rlqdag}
          \end{center}
      \end{minipage}
     \end{figure}
\noindent The following generalizes well-formedness to encompass recursion: 
    \begin{definrep}[Well-formedness]
    A \mudag{} $\eqnode{\alpha}$ is well-formed if and only if 
    $\forall t, t' \in \salpha{\alpha}{\emptyset}, \quad \evaluation{t} =  \evaluation{t'}$.
    \end{definrep}
    In this definition $\evaluation{t}$ returns the set-semantics interpretation of the individual recursive relational algebraic term $t$ (i.e. the set of tuples returned by $t$ when evaluated in a database instance).

    The  \emph{type} of an operation node $d$ is the set of column names obtained in the result of the evaluation of any subbranch of $d$.
    In a well-formed \mudag{}, all $d_{i}$ under the same equivalence node are semantically equivalent, and thus have the same type.
    For this reason, we also define the \emph{type of an equivalence node}: $\type{\gamma}$ as the type of one of its operation nodes. 
    Notice that the type of an annotated equivalence node of some fixpoint operation node  $d$ corresponds to the type of the equivalence node of the constant part of $d$.  
    For a given filter operation node $\filtaf{f}{{\gamma}}$, we denote by $\filtColumns{f} \subseteq \type{\filtaf{f}{{\gamma}}}$ the subset of column names used in the filtering function $f$.

    \subsection{Recursive terms and rule applicability}
\label{section:rec-terms_rule-applicability}

A significant novelty introduced by recursion resides in the criteria used to trigger rewrite rules. In the non-recursive setting these criteria are trivial in the sense that they only depend on top-level operators. In the example of Fig.~\ref{fig:well-formed-rlqdag}, when applying join distributivity over union, the applicability of the rewrite rule $A \cup (B \bowtie C) ~ \longrightarrow ~ (A \bowtie B) \cup (A \bowtie C)$ can be determined by examining only the combination of the two top-most operators, i.e. the top-level ($\cup$) with the operator immediately underneath (i.e. $\bowtie$).
 
For some other rewrite rules, applicability criteria may also include some additional verifications such as (non)-interaction between e.g. the set of columns being filtered, or the columns being removed (in the cases of filter and antiprojection, respectively).
In the non-recursive setting, these verifications can be done without the need to traverse the whole term. This means that usually no further traversals of subtrees of operators are required. %
For instance, in the previous example of join distributivity over union, $B$ and $C$ do not need to be traversed at all when determining rule applicability. %
In sharp contrast, rules for transforming recursive terms rely on criteria that are significantly more complex as they sometimes require a whole traversal of the recursive part of a fixpoint term. This is because opportunities for rule application with recursive terms depend on how the recursive variable is used within the recursive parts of fixpoints. It is known since the works of \cite{LFP-RA,lfp-ra-extension,geneves-sigmod20} that criteria for rule application are significantly more complex in the presence of recursion as they need to examine how the recursive variables are used. A key contribution of this work is to show that it is still possible to apply rules over sets of recursive terms at once,  using the new concept of \emph{annotated equivalence nodes} (introduced in \S~\ref{section:annotated_alternative}). We first need some preliminary definitions. 
 
\subsection{Preliminary definitions for \mudag{}}
\label{section:criteria-definition}
\newcommand{\unfold}[1]{\texttt{unfold}(#1)}
%
\begin{definrep}[Unfolding]
Let $\alpha$ be an equivalence node. The unfolding of $\alpha$, denoted $\unfold{\alpha}$ is $\alpha$ in which all occurrences of equivalence node variable names $Y$ are replaced by their definitions (binders are simply unfolded).
\end{definrep}

We now define two auxiliary functions over \mudag{} terms.
These functions will be used for defining the new concept of annotated equivalence nodes introduced in Section~\ref{section:annotated_alternative}.

\paragraph{Notion of destabilizer in \mudag{}} 
We define the destabilizer of an \mudag{} equivalence node as the set of columns that can be modified by an iteration of a parent fixpoint node. 
Specifically, \dee{}() traverses  subterms and analyzes how the occurrences of free variables are used in order to compute the set of columns that are subject to modifications during a fixpoint node iteration (e.g. renaming or antiprojection).

\begin{definrep}
For a fixpoint operation node $\mu X.~\gamma~ \cup~ \alpha_{rec}$ we consider $\alpha'=\unfold{\alpha_{rec}}$ and we define $\destab(\alpha', X)$ as the following set of column names:
\begin{small}
$\destab(\alpha', X) = \{c \in \mathfrak{C} ~|~ \exists p \in \deriv{\alpha'}{X} ~~p(c) \neq c \}$
\end{small}
\noindent 
where $\mathfrak{C}$ is an infinite set of column names and $\deriv{\cdot}{\cdot}$ computes the set of derivations over a \mudag{} term:%
\begin{center}\vspace{-0.1cm}
   \begin{footnotesize}
   \renewcommand\arraystretch{1.4} 
  $\begin{array}{rcl}
  \deriv{(d, \alpha)}{X} & = & \deriv{d}{X} \\
  \deriv{\eqnode{\alpha_1} \cup \eqnode{\alpha_2}}{X} & = &   \deriv{\alpha_1}{X} \cup \deriv{\alpha_2}{X}  \\
  \deriv{ \eqnode{\alpha_1} \triangleright \eqnode{\alpha_2} }{X} & = & \deriv{\alpha_1}{X}  \\
  \deriv{ \eqnode{\alpha_1} \bowtie \eqnode{\alpha_2}}{X} & = & \deriv{\alpha_1}{X} \cup \deriv{\alpha_2}{X} \\
  \deriv{\renameaf{a}{b}{\eqnode{\alpha}}}{X} & = & \{p \circ (b \rightarrow a, a \rightarrow \perp) \thickspace | \thickspace p \in \deriv{\alpha}{X} \} \\
  \deriv{\antiprojection_{a}{(\eqnode{\alpha})}}{X} & = & \{p \circ ( a \rightarrow \perp) \thickspace | \thickspace p \in \deriv{\alpha}{X} \} \\
  \deriv{\sigma_f(\eqnode{\alpha})}{X} & = & \deriv{\alpha}{X} \\
  \deriv{\mu (Z.~\gamma~ \cup~ \eqnode{\alpha}_{\mathfrak{D}}^{\mathfrak{R}})}{X} & = & \emptyset \\
  \deriv{X}{X} & = & \{ () \} \thickspace $ (a singleton identity)$ \\
  \deriv{R}{X} & = & \emptyset \\
 \end{array}$
\end{footnotesize}
\end{center}\vspace{-0.1cm}
and where $\circ$ represents the composition and $(a_1 \rightarrow b_1, ..., a_n \rightarrow b_n)$ denotes the function that maps each $a_i$ to its $b_i$ and every other column name to itself. 
\end{definrep}

For instance, in the \mudag{} of Fig.~\ref{fig:lqdag_recursion}, $\mathfrak{D} = \{s,m\}$ in $\Sigma$ and $\mathfrak{D'} = \{t,m\}$ in $\Sigma'$. Intuitively, this is because these columns are renamed in front of the recursive variables.

\paragraph{Notion of rigidity in \mudag{}}

We define a function \strict() that computes the set of columns that cannot be added nor removed from a fixpoint operation node, without breaking the semantics of the \mudag{} term. %
 A column $c \in \mathfrak{C}$ cannot be added nor removed from an annotated equivalence node $\alpha_{rec}$ (recursive in $X$)
when $c ~\in~ \strict(\unfold{\alpha}, X)$ and \strict() is defined as follows:

\begin{definrep}[Rigidity]\mbox{ }\\
    \begin{footnotesize}
     \renewcommand\arraystretch{1.4} 
    $\begin{array}{rcl}
      \strict((d, \alpha), X ) & = & \strict(d, X) \\
      \strict(\eqnode{\alpha_1} \cup \eqnode{\alpha_2}, X) & = &  \strict(\alpha_1, X) \cup \strict(\alpha_2, X)  \\
      \strict(\eqnode{\alpha_1} \bowtie \eqnode{\alpha_2}, X) & = & \strict(\alpha_1, X) \cup \strict(\alpha_2, X)  \\
      \strict(\eqnode{\alpha_1} \triangleright \eqnode{\alpha_2}, X) & = &\strict(\alpha_1, X) \cup \strict(\alpha_2, X)  \\
      \strict(\renameaf{a}{b}{\eqnode{\alpha}}, X) & = &  \strict(\alpha, X) \cup \{a, b\}   \\
      \strict(\antiprojection_{a}{(\eqnode{\alpha})}, X) & = & \emptyset~~ \text{when}~ X \notin \text{\emph{free}}(\alpha)\\
      & = & \strict(\alpha, X) \cup \{a\}~\text{otherwise}  \\
      \strict(\sigma_f(\eqnode{\alpha}), X) & = & \strict(\alpha, X) \cup \text{\emph{filt}}(f) \\
      \strict(\mu (Z.\gamma\cup\eqnode{\alpha}_{\mathfrak{D}}^{\mathfrak{R}}), X) & = & \strict(\alpha, X) \cup \strict(\gamma, X) \\
      \strict(R, X) & = & type(R) $ when $ X \neq R \\
      \strict(X, X) & = & \emptyset \\
     \end{array}$

    \end{footnotesize}
    
    \end{definrep}

    For instance, in the \mudag{} of Fig.~\ref{fig:lqdag_recursion}, $\mathfrak{R} = \{s,t\}$ in $\Sigma$. Intuitively, this is because these columns cannot be added nor removed without changing the semantics of the recursion.

    \subsection{Annotated equivalence node}
    \label{section:annotated_alternative}

 An annotated equivalence node ($\alpha_{rec}$ in the abstract syntax of \mudag{} terms given in Fig.~\ref{fig:syntax_muLQDAG}) is an equivalence node of a recursive part of a fixpoint, which is annotated with information that characterize how the recursive variable is used. Specifically:
 
 \begin{definrep}
  Given a \mudag{} operation node $d = \mu X.~ \gamma~ \cup~ \alpha_{rec}$, the annotated equivalence node $\alpha_{rec}$ is defined as:
  \begin{small}
    $\eqnode{\alpha}_{\mathfrak{D}}^{\mathfrak{R}}$
where $\mathfrak{D} = \destab(\alpha, X)$ and $\mathfrak{R} = \strict(\alpha, X)$. 
  \end{small}
 \end{definrep}

 For example, in Fig.~\ref{fig:lqdag_recursion}'s \mudag{}, subterms $\Sigma$ and $\Sigma'$ (that correspond to the right-to-left and left-to-right path traversals) carry different annotations: $\mathfrak{D}=\{s,m\}$ in $\Sigma$ whereas $\mathfrak{D'}=\{t,m\}$ in $\Sigma'$. They will be used for guiding transformations of recursive \mudag{} terms.
 For instance, on $\Sigma$, pushing a filter on $t$ is possible but not on $s$.  A pair ($s$, $t$) might not pass the filter but still be useful to discover a query answer ($s'$, $t$) that passes the filter. 

Annotations are intended to characterize all the subterms of the annotated equivalence node (thanks to definition~\ref{def:consistency}). %
%
Annotated equivalence nodes constitute a novel notion whose goal is to guide and maximize the grouped application of transformations, while also maximizing the sharing of common subterms. %
In the sequel, we detail \mudag{}   transformations, by introducing new rewrite rules capable of transforming sets of subterms at once.

\paragraph{Notion of consistency} 
   
    Intuitively, a \mudag{} is \emph{consistent} iff it is well-formed and in addition, for any annotated equivalence node
    $\eqnode{\alpha_2}_{\mathfrak{D}}^{\mathfrak{R}}$, 
    the annotations $\mathfrak{D}$ and $\mathfrak{R}$ are the same for all operation nodes directly underneath (no matter on which subbranch of the equivalence node they are computed, they coincide). More formally:

    \vspace{-0.1cm}
    \begin{definrep}[Consistency] \label{def:consistency}
      A \mudag{} $\eqnode{\alpha}$ is consistent iff:
      \begin{small}
        \vspace{-0.1cm}
      \begin{enumerate}
        \item it is  well-formed;
        \item for all fixpoint operator node $\mu X. \thinspace \gamma \cup \eqnode{\alpha_2}_{\mathfrak{D}}^{\mathfrak{R}}$ occurring in $\alpha$, we have $\consistent{\alpha_2}{X}{\mathfrak{D}}{\mathfrak{R}}$ where:\\
        $\left\{
        \begin{footnotesize}\renewcommand\arraystretch{1.4} 
        \begin{array}{rcl}
          \consistent{\eqnode{d}}{X}{\mathfrak{D}}{\mathfrak{R}} &  \stackrel{\text{def}}{=}  & \destab(d, X) = \mathfrak{D} ~\land~ \strict(d, X)=\mathfrak{R} \\
          \consistent{\eqnode{d,\alpha}}{X}{\mathfrak{D}}{\mathfrak{R}} & \stackrel{\text{def}}{=} & \destab(d, X)=\mathfrak{D} ~\land~ \strict(d, X)=\mathfrak{R} \\ & &
           ~\land ~ \consistent{\eqnode{\alpha}}{X}{\mathfrak{D}}{\mathfrak{R}} \\
        \end{array}
        \end{footnotesize}
        \right.$
      \end{enumerate}
    \end{small}
    \vspace{-0.1cm}
    \end{definrep}
    \noindent

    An example of a consistent \mudag{} is shown in Fig.~\ref{fig:lqdag_recursion} provided $\mathfrak{D}, \mathfrak{R}, \mathfrak{D'}$ and $\mathfrak{R'} $ are correctly computed. An example of an inconsistent \mudag{} would be a variant of Fig.~\ref{fig:lqdag_recursion} with a wrong annotation (e.g. $\mathfrak{D'}=\mathfrak{D}$), later exposing the structure to incorrect transformations. This is because incorrect criteria satisfaction might then result in, for example, wrongly pushing a join operation node inside a fixpoint operation node. This would typically produce a not well-formed \mudag{}, breaking the semantics of the initial query. In the remaining, we only consider consistent \mudag{}s.

    \section{\mudag{} Transformations}
    \label{section: mudag-adapted-rewrite-rules}

    We now propose \mudag{} transformations whose purpose is to efficiently build the space of equivalent recursive plans.

    \mudag{} transformations capture all the most advanced rewritings of recursive algebraic terms found in previous approaches (e.g. \cite{alpha-RA,geneves-sigmod20}). Unlike previous approaches \mudag{} transformations make it possible to systematically group sets of recursive terms and exploit the sharing of common subterms to avoid redundant computations. %
    \mudag{} transformations leverage annotated equivalence nodes to guide the transformation of recursive subterms. They also update the \mudag{} structure with new annotations when needed, in an incremental manner. The incremental aspect for updating annotations is important as it avoids numerous subterm traversals, thus enabling more efficient grouped transformations. %
    For each recursive transformation, we describe which subterms can be shared, how newly generated terms are attached and what happens with the other plans already present in the equivalence node. The creation of new combinations of operation nodes may in turn generate more opportunities for transformations that are also explored.

  \subsection{\mudag{} rewrite rules} \label{sec:transfo}
    We formalize all these ideas by introducing \mudag{} rewrite rules, based on the syntax of \mudag{} terms introduced in Section~\ref{sec:mudag-syntax}.
    Specifically, \mudag{} rewrite rules are formalized as functions that take an equivalence node $\gamma$ and return another equivalence node $\gamma'$ obtained after applying transformations.
  
    \paragraph{Pushing filters into fixpoint operation nodes}
    For pushing filters into sets of recursive terms, we introduce a function $\pf{}$, defined by considering all the syntactic decomposition cases of the input $\gamma$. Fig.~\ref{fig:pushing-filterr} focuses on the two main cases that correspond to potential opportunities of pushing filters, i.e. the cases $\pf{\eqnode{d}}$ and $\pf{\eqnode{d, \alpha}}$ where $d$ filters an equivalence node which contains a recursive subterm. For all the other cases, $\pf{}$ does not reorder operation nodes in the \mudag{} structure but simply traverses it in search for further transformation opportunities underneath\footnote{For instance, two sample cases are the following:    
    $$\begin{array}{lcl} 
      \pf{\mu X.~ \gamma~ \cup~ \recursiveNodeAnot{\alpha}{D}{R}} & = & \mu X.~ \expand{\gamma} ~\cup~ \eqnode{\expand{\alpha}}_{\mathfrak{D}}^{\mathfrak{R}} \\
      \pf{\eqnode{ \eqnode{A} ~\bowtie~ \eqnode{B}}} & = & \expand{\eqnode{A}} ~\bowtie~ \expand{\eqnode{B}} 
    \end{array}$$
    where the $\expand{}$ function, defined in Section~\ref{sec:expandfunction}, simply traverses subterms in search for more transformation opportunities. $\pf{}$ is defined similarly for all the other syntactic cases of $\gamma$. Notice that when $\gamma$ is a reference $Y$, there is no need to introduce a new binder, the reference name is used directly.
    }.

    \begin{figure*}   \begin{footnotesize}
    \input{rule-pf}
    \hspace{0.3cm}
    \minipage{0.24\textwidth}%
    \begin{center}
      \includegraphics[width=0.8\linewidth]{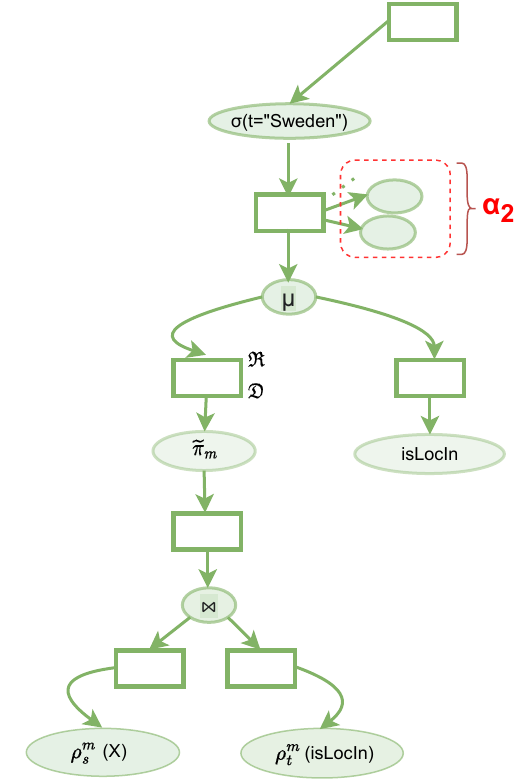}
      \caption{Initial \mudag{} before expansion with opportunity to apply \pf{}.}
      \label{fig:pushing-filter-before}
    \end{center}
  \endminipage
  \hspace{0.1cm}
  \minipage{0.25\textwidth}%
\begin{center}
  \includegraphics[width=1\linewidth]{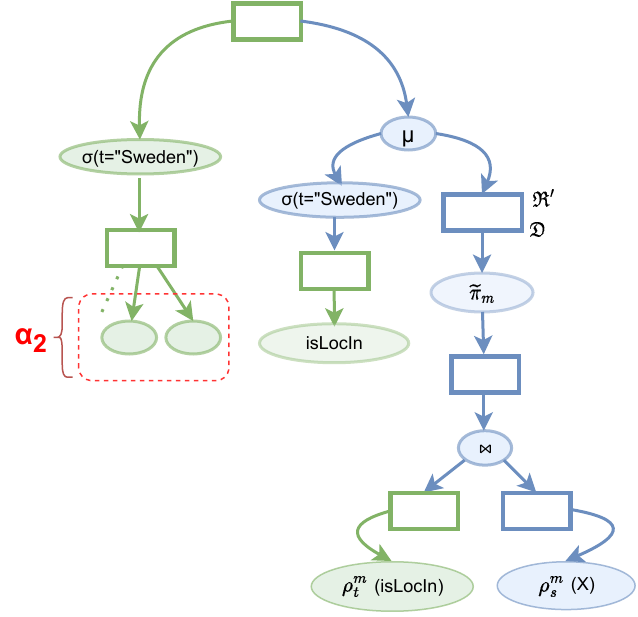} 
      \caption{\mudag{} expansion by pushing a filter in a fixpoint operation node, with branches added by $\pf{\cdot, \text{true}}$ in blue.}
      \label{fig:pushing-filter-after}
\end{center}
  \endminipage
 
\end{footnotesize}
  \end{figure*}

In Fig.~\ref{fig:pushing-filterr}, whenever the filter can be pushed through the fixpoint operation node, $\pf{}$ generates a new \mudag{} subterm ``\pushedf{}'' in which the filter operation node is put within the constant part of the fixpoint operation node. 

 For this transformation to happen, the criteria $\filtColumns{f} \cap \mathfrak{D}$ must be satisfied. Whenever it is not the case, the filter is not rearranged but the \mudag{} structure is simply recursively traversed in search for more transformation opportunities. 
The creation of new combinations of operation nodes may in turn provide new opportunities for other rewritings (for instance, new opportunities for pushing filters even further, or even other kinds of rewritings). This is the role of the $\expand{}$ function, formally defined in section~~\ref{sec:expandfunction}. Intuitively, a call to $\expand{}$ on an equivalence node may further populate the equivalence node with new subterms. The $\expand{}$ function is in charge of exploring all opportunities for transformations. This is useful because other rewrite rules may apply, and the $\expand{}$ function basically triggers all possible  applications of all rewrite rules.  %
Notice that the function $\pf{}$ takes a parameter $\replaceparam{}$ as input. This parameter is used to control whether the initial term ($\unpushedf{}$) is preserved or not in the expansion. For instance, $\pf{\cdot, \text{false}}$ will keep the initial term whereas $\pf{\cdot, \text{true}}$ will replace it by the term  ``$\pushedf{}$'' in which the filter is pushed.  When the parameter $\replaceparam{}$ is omitted, it is assumed to be \text{false}.

For example, the query $\mathcal{Q}_{e_2}$:\begin{center}$\begin{footnotesize}  ?s~~ \leftarrow ~~ ?s~~ \texttt{isLocatedIn+} ~~ \text{Sweden}\end{footnotesize}$\end{center}
 \noindent retrieves all nodes that are connected to the constant node ``\text{Sweden}'' by a path composed of a sequence of edges labeled \begin{small}\texttt{isLocatedIn}\end{small} from the Yago graph dataset \cite{yagooo}. Fig.~\ref{fig:pushing-filter-before} illustrates graphically a portion of $\mathcal{Q}_{e_2}$  \mudag{}, and Fig.~\ref{fig:pushing-filter-after} depicts its updated structure obtained after the pushing filter transformation defined in Fig.~\ref{fig:pushing-filterr}. New branches created by $\pf{}$ are represented in blue color. The new term is added in the same equivalence node as the previous term, since they are semantically equivalent. Notice the incremental update of annotations performed by $\pf{}$ in   Fig.~\ref{fig:pushing-filterr}: the annotations of the newly created term (in blue) are obtained from the annotations of the initial term. In that case $\mathfrak{D}$ is simply propagated whereas $\mathfrak{R'} = \mathfrak{R \cup \filtColumns{f}}$.

      \paragraph{Pushing joins into fixpoint operation nodes}

     For pushing joins into sets of recursive terms we define a function $\pj{}$.  $\pj{}$ takes an equivalence node $\gamma$ as input and returns an expanded equivalence node $\gamma'$ that contains all the subterms in $\gamma$ with, in addition, all the subterms where all joins pushable in fixpoint operation nodes have been pushed. We define $\pj{}$ for all possible syntactic decompositions of a \mudag{}. Fig.~\ref{fig:pushing-joinn} presents the definition of $\pj$ for the two main cases of interest. Again, other cases are defined without structure rearrangement but involving recursive calls to $\expand{}$ in search for further transformation opportunities underneath. %
Notice that whenever a join can be pushed within a fixpoint operation node (see Fig.~\ref{fig:pushing-joinn}), it is possible to share the constant part of the fixpoint operation node. This is made explicit by the creation of the outermost ``let'' binder whose goal is to define and associate a name to the equivalence node, so that it can be referred multiple times (thus explicitly showing the sharing of subterms).

For example, the query $\mathcal{Q}_{e_3}$:\begin{center}$\begin{footnotesize}  ?s,~?t~~ \leftarrow ~~ ?s~~ \texttt{haschild+/livesin} ~~?t\end{footnotesize}$\end{center}
\noindent retrieves all pairs of nodes that are connected by a path composed of a sequence of edges labeled \begin{small}\texttt{haschild}\end{small} followed by a single edge labeled \begin{small}\texttt{livesin}\end{small}. Fig.~\ref{fig:pushing-join}  illustrates graphically a portion of $\mathcal{Q}_{e_3}$  \mudag{}, after the application of the rule that pushes joins into fixpoint operation nodes described in Fig.~\ref{fig:pushing-joinn}. Fig.~\ref{fig:pushing-join} shows that the newly created branch (in blue color) extends the set of semantically equivalent terms of the existing equivalence node.

\begin{figure*}[h]   \begin{footnotesize}
\input{rule-pj}
\minipage{0.48\textwidth}%
\begin{center}
\includegraphics[width=0.8\linewidth]{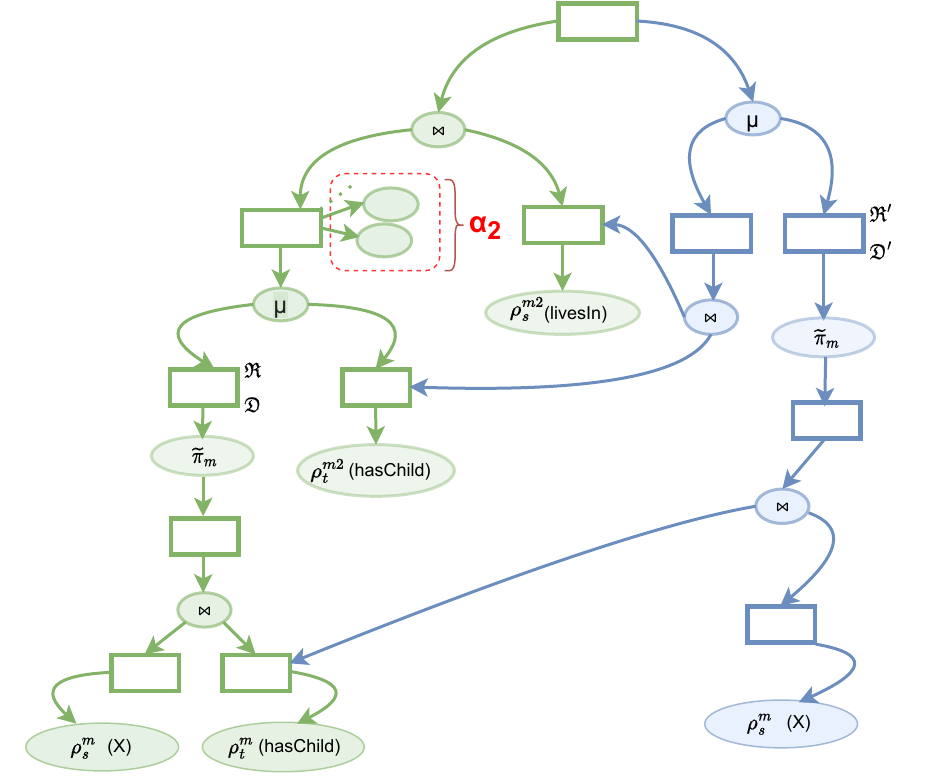}
\end{center}
\caption{Expansion of a \mudag{}  by pushing a join in a fixpoint operation node. 
The initial \mudag{} is in green color, and parts added by $\pj{}$ are in blue color.}
\label{fig:pushing-join}
\endminipage
\end{footnotesize}
\end{figure*}

      \paragraph{Merging fixpoint operation nodes} \label{sec:merging}
      
The function $\mf{}$ defined in Fig.~\ref{fig:merging-fixpointss} takes an input $\gamma$ and returns an equivalence node $\gamma'$ containing all the subterms in $\gamma$ with, in addition, all the terms in which recursions that can be merged are merged. A merging happens whenever (i) two recursions are joined and (ii) their annotated equivalence nodes allow them to be merged into a single recursion, as described in Fig.~\ref{fig:merging-fixpointss}. 
The constant part of the new recursive term created is the join of the constant parts of the two initial fixpoints, and a new recursive part is created. Since the constant part has changed, a new recursive variable is introduced and recursive parts are also new (and cannot be shared\footnote{This does not prevent sharing potential subparts with no occurrence of a free variable.}). 

For example, the query $\mathcal{Q}_{e_4}$: \begin{center}$\begin{footnotesize}  ?s,~?t~~ \leftarrow ~~ ?s~~ \texttt{islocatedin+/dealswith+} ~~?t\end{footnotesize}$\end{center}
\noindent retrieves all pairs of nodes that are connected by a path composed of two successive sequences of edges labeled \begin{small}\texttt{islocatedin}\end{small} and  \begin{small}\texttt{dealswith}\end{small} respectively. Fig.~\ref{fig:merging-fixpoints}  illustrates a portion of $\mathcal{Q}_{e_4}$  \mudag{}, with the new recursive term produced, using subterm-sharing, after the merging of fixpoint operation nodes.

      \begin{figure}
        \begin{footnotesize}
          \minipage{0.45\textwidth}
        \renewcommand\arraystretch{1.1}
        $\mf{ \hm{\bigr[}~ {\eqnode{\mu X_1.~ \gamma_1~ \cup~ \recursiveNodeAnot{\alpha_1}{D_1}{R_1}}} \bowtie {\eqnode{\mu X_2.~ \gamma_2~ \cup~ \recursiveNodeAnot{\alpha_2}{D_2}{R_2}}}  ~\hm{\bigr]}~} =\newline \mbox{ }~\hspace{0cm}\left\{\begin{array}{ll}
          \sbullet[1.1] &  ~\hm{\bigr[}~ \text{let $\text{const}_1$} = \gamma_1~ \text{in}, \text{let $\text{const}_2$} = \gamma_2~ \text{in} \\
            & \quad \expand{\eqnode{ \mu X_1.~ \text{const}_1~ \cup~ \recursiveNodeAnot{\alpha_1}{D_1}{R_1}}} ~\bowtie~ \expand{\eqnode{\mu X_2.~ \text{const}_2~ \cup~ \recursiveNodeAnot{\alpha_2}{D_2}{R_2}}}  ~~ , \\
            & \quad {\color{blue} \mu X.~ \expand{\eqnode{\text{const}_1 \bowtie \text{const}_2}} \cup~ \expand{\eqnode{\subst{\alpha_1}{X_1}{X} ~ \cup ~ \subst{\alpha_2}{X_2}{X}}_{\mathfrak{D}}^{\mathfrak{R}}~ }} ~~ \hm{\bigr]}  \\
          & ~\text{when}~~ \text{$(\type{\gamma_1} \cap \type{\gamma_2}) \cap (\mathfrak{D_1} \cup \mathfrak{D_2}) = \emptyset$} \\
          &  ~\text{and}~ \text{$\type{\gamma_1} \backslash \type{\gamma_2} \cap \mathfrak{R_2} = \emptyset $} ~\text{and}~ \text{$\type{\gamma_2} \backslash \type{\gamma_1} \cap \mathfrak{R_1} = \emptyset $}  \\
         \sbullet[1.1] &\hm{\bigr[}~ \expand{\eqnode{ \mu X_1.~ \text{const}_1~ \cup~ \recursiveNodeAnot{\alpha_1}{D_1}{R_1}}} \bowtie \expand{\eqnode{\mu X_2.~ \text{const}_2~ \cup~ \recursiveNodeAnot{\alpha_2}{D_2}{R_2}}} \hm{\bigr]}~~~\\
         & \text{otherwise}\end{array}
         \right.$
        
        \mbox{}\\
        \mbox{}\\
             $\mf{ \hm{\bigr[}~ {\eqnode{\mu X_1.~ \gamma_1~ \cup~ \recursiveNodeAnot{\alpha_1}{D_1}{R_1}, \alpha_3}} \bowtie {\eqnode{\mu X_2.~ \gamma_2~ \cup~ \recursiveNodeAnot{\alpha_2}{D_2}{R_2}, \alpha_4}}  ~\hm{\bigr]}~} =\newline \mbox{ }~\hspace{0cm}\left\{\begin{array}{ll}
              \sbullet[1.1] &\hm{\bigr[}~ \text{let $\text{const}_1$} = \gamma_1~ \text{in}, \text{let $\text{const}_2$} = \gamma_2~ \text{in} \\
                & \quad \expand{\eqnode{ \mu X_1.~ \text{const}_1~ \cup~ \recursiveNodeAnot{\alpha_1}{D_1}{R_1}}} ~\bowtie~ \expand{\eqnode{\mu X_2.~ \text{const}_2~ \cup~ \recursiveNodeAnot{\alpha_2}{D_2}{R_2}}} ~~ , \\
                & \quad {\color{blue} \mu X.~ \expand{\eqnode{\text{const}_1 \bowtie \text{const}_2}} \cup~ \expand{\eqnode{\subst{\alpha_1}{X_1}{X} ~ \cup ~ \subst{\alpha_2}{X_2}{X}}_{\mathfrak{D}}^{\mathfrak{R}}~ }}~~,\\
                & \quad \expand{\eqnode{\alpha_3}} \bowtie  \expand{\eqnode{\mu X_2.~ \gamma_2~ \cup~ \recursiveNodeAnot{\alpha_2}{D_2}{R_2}, \alpha_4}}  ~~, \\
                & \quad \expand{\eqnode{\alpha_4}} \bowtie \expand{\eqnode{ \mu X_1.~ \gamma_1~ \cup~ \recursiveNodeAnot{\alpha_1}{D_1}{R_1}, \alpha_3} }   ~~ \hm{\bigr]}  \\
              & \text{when}~~ \text{$(\type{\gamma_1} \cap \type{\gamma_2}) \cap (\mathfrak{D_1} \cup \mathfrak{D_2}) = \emptyset$}\\
              & ~~\text{and}~~  \text{$\type{\gamma_1} \backslash \type{\gamma_2} \cap \mathfrak{R_2} = \emptyset $} ~~\text{and}~~ \text{$\type{\gamma_2} \backslash \type{\gamma_1} \cap \mathfrak{R_1} = \emptyset $} \\
             \sbullet[1.1] & \hm{\bigr[}~ \expand{{{\eqnode{\mu X_1.~ \gamma_1~ \cup~ \recursiveNodeAnot{\alpha_1}{D_1}{R_1}, \alpha_3}}}} \bowtie \expand{{\eqnode{\mu X_2.~ \gamma_2~ \cup~ \recursiveNodeAnot{\alpha_2}{D_2}{R_2}, \alpha_4}} }  ~, \\
            & \quad \expand{\eqnode{\alpha_3}} \bowtie  \expand{\eqnode{\mu X_2.~ \gamma_2~ \cup~ \recursiveNodeAnot{\alpha_2}{D_2}{R_2}, \alpha_4}}  ~~, \\
            &  \quad \expand{\eqnode{\alpha_4}} \bowtie \expand{\eqnode{\mu X_1.~ \gamma_1~ \cup~ \recursiveNodeAnot{\alpha_1}{D_1}{R_1}, \alpha_3}}  ~~\hm{\bigr]} ~~ \text{otherwise}  
            \end{array}\right.$
            \mbox{}\\
      
            where $\mathfrak{D} = \mathfrak{D_1} \cup \mathfrak{D_2}$ and $\mathfrak{R}=\mathfrak{R_1} \cup \mathfrak{R_2}$ and $\subst{\alpha_i}{X_i}{X}$ denotes $\alpha_i$ in which all occurrences of $X_i$ are replaced by $X$. This is because the only transformation of the recursive part that needs to be propagated to update the annotations is the union of the two former recursive parts and both $\destab{}$ and $\strict{}$ are distributive over union.
  
          \caption{Merging fixpoint operation nodes.} 
          \label{fig:merging-fixpointss}
          \endminipage\hfill
    \end{footnotesize}
      \end{figure}

      \begin{figure}
       \centering
       \includegraphics[width=1\linewidth]{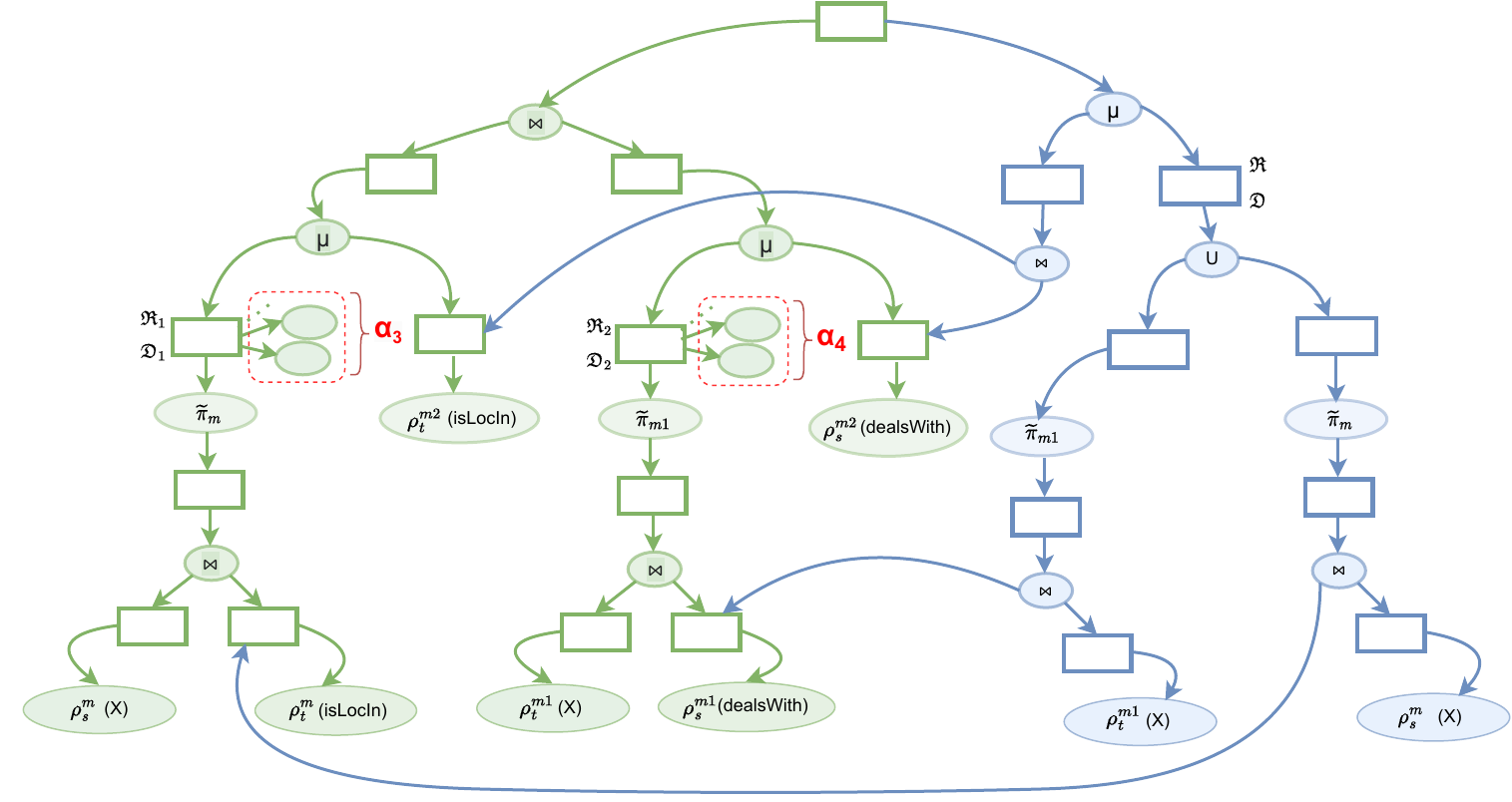}
       \caption{\mudag{} structure after merging recursions.}
       \label{fig:merging-fixpoints}
     \end{figure}

     \paragraph{Pushing an antiprojection in a fixpoint operation node}

\begin{figure}
    \begin{footnotesize}
    \minipage{0.5\textwidth}\renewcommand\arraystretch{1.3} 
    $\pp{\eqnode{~~\antiprojection_a~~{\eqnode{\mu X.~ \gamma~ \cup~ \recursiveNode{\alpha}}}}, \replaceparam{}} =\newline \mbox{ }~\hspace{0.3cm}
    \left\{\begin{array}{ll}
      \sbullet[1.1] & \hm{\bigr[} ~ \unpushedp{}, ~ \pushedp{} ~ \hm{\bigr]}  \text{ when }~~  a \notin \mathfrak{R}  \text{ and \replaceparam{} \text{ is false}}  \\
      \sbullet[1.1] & \hm{\bigr[} ~ \pushedp{} ~ \hm{\bigr]}  \text{ when }~~  a \notin \mathfrak{R}  \text{ and \replaceparam{} \text{ is true}}  \\
     \sbullet[1.1] & \hm{\bigr[} ~ \unpushedp{} ~ \hm{\bigr]} ~~ \text{ otherwise }  
    \end{array}\right.
    $
    \mbox{}\\
    \mbox{}\\
    \mbox{}\\
    $\pp{\eqnode{~~\antiprojection_a~~({\eqnode{\mu X.~ \gamma~ \cup~ \recursiveNode{\alpha}, \alpha_2}})}, \replaceparam{}} =\newline \mbox{ }~\hspace{0.3cm}\left\{\begin{array}{ll}
      \sbullet[1.1] & \hm{\bigr[} ~ \unpushedptwo{}, ~ \pushedp{}, ~ \expandpalphatwo{} ~ \hm{\bigr]}  \text{ when }~~  a \notin \mathfrak{R} \text{ and } \replaceparam{} \text{ is false } \\  
      \sbullet[1.1] & \hm{\bigr[} ~ \pushedp{}, ~ \expandpalphatwo{} ~ \hm{\bigr]}  \text{ when }~~  a \notin \mathfrak{R} \text{ and } \replaceparam{} \text{ is true } \\    
     \sbullet[1.1] & \hm{\bigr[}~ \unpushedptwo{} ~, ~ \expandpalphatwo{} ~ \hm{\bigr]} ~~ \text{ otherwise }  
    \end{array}\right.$
     \\ \text{ where: } \\
$\begin{array}{rcl}
  \unpushedp{} &=& \antiprojection_a(\expand{{\eqnode{\mu X.~ \gamma ~ \cup~ \recursiveNode{\alpha}}}}) \\
  \pushedp{} &=& {\color{blue}\mu X'.~ \text{expand}(\eqnode{\antiprojection_a({\gamma})}) ~ \cup~ \eqnode{\expand{\subst{\alpha}{X}{X'}}}_{\mathfrak{D}}^{\mathfrak{R}} }\\
  \unpushedptwo{} &=& \antiprojection_a(\expand{{\eqnode{\mu X.~ \gamma ~ \cup~ \recursiveNode{\alpha}, \alpha_2}}})\\
  \expandpalphatwo{} &=& \expand{\eqnode{\antiprojection_a(\eqnode{\alpha_2})}}\\
  \end{array}$
      \caption{Pushing antiprojection in an equivalence node containing a fixpoint.}\label{fig:pushing-antiprojectionn}
    \endminipage
\end{footnotesize}
  \end{figure}


    For pushing antiprojections into fixpoint operation nodes we introduce a function $\pp{}$ that takes an equivalence node $\gamma$ as input and returns an expanded equivalence node $\gamma'$ where all pushable antiprojections have been pushed. 
    $\pp{}$ is defined in Fig.~\ref{fig:pushing-antiprojectionn}. %
    The antiprojection is pushed when the criteria are satisfied, and this results in the creation of a new term in the equivalence node. Depending on the value of the parameter \replaceparam{} (false when omitted), the initial term is either preserved or discarded in the expansion. 
    Whenever the criteria is not satisfied, the initial term is left unchanged, but traversed in search for more transformation opportunities. %
        %

    \paragraph{Pushing an antijoin in a fixpoint operation node}

      For pushing antijoins into fixpoints we introduce a function $\pa{}$ that takes an equivalence node $\gamma$ as input and returns an expanded node $\gamma'$ that contains all the subterms in $\gamma$ with, in addition, all subterms where all pushable antijoins have been pushed in fixpoint operation nodes.
      $\pa{}$ is defined in Fig.~\ref{fig:pushing-antijoinn}. Again, Fig.~\ref{fig:pushing-antijoinn} focuses on the syntactic cases that correspond to when an antijoin might be pushed in a fixpoint. When criteria are satisfied, the antijoin is pushed in the constant part of the fixpoint operation node and the newly created term is added to the set of equivalent terms along with the rest.

      \paragraph{Transformations of non-regular expressions}
      Notice that \mudag{} transformations support expressions with non-regular recursive patterns. For instance,  $A^n B^n$ can simply be written as the \mudag{} fixpoint operation node  $\mu X.  [A/B] \cup [A/X/B]$ where ``$/$'' is an abbreviation: $W/Z$ stands for the \mudag{} subexpression that joins the target column of $W$ with the source column of $Z$. As such, transformations are equally applicable for regular and non-regular  fixpoint operation nodes.

      \begin{figure}
        \begin{footnotesize} 
        \minipage{0.5\textwidth}\renewcommand\arraystretch{1.2} 
        $\pa{\eqnode{{\eqnode{\mu X.~ \gamma~ \cup~ \eqnode{\alpha}_{\mathfrak{D}}^{\mathfrak{R}}}}\triangleright \beta}}  =\newline \mbox{ }~\hspace{0.1cm}\left\{\begin{array}{ll}
          \sbullet[1.1]&\text{let const} = \gamma ~\text{in}~ \\
            & \quad \hm{\bigr[} ~~ \eqnode{\expand{\eqnode{\mu X.~ \text{const} ~ \cup~ \eqnode{\alpha}_{\mathfrak{D}}^{\mathfrak{R}}} } \triangleright \expand{\beta}} ~, \\
            & \quad {\color{blue}\mu X'.~ \expand{\eqnode{\text{const} \triangleright \beta}} ~ \cup~  \eqnode{\expand{\subst{\alpha}{X}{X'}}}_{\mathfrak{D}}^{\mathfrak{R}} } ~~ \hm{\bigr]}  \\
          & {when}~~ \type{\beta} \cap \mathfrak{D} = \emptyset \\
         \sbullet[1.1] & \hm{\bigr[} ~~ \expand{\eqnode{\mu X.~ \gamma ~ \cup~ \eqnode{\alpha}_{\mathfrak{D}}^{\mathfrak{R}}}} \triangleright \expand{\beta}~ ~~\bigr] ~~~ {otherwise}  
        \end{array}\right.$
        \mbox{}\\
        \mbox{}\\
        $\pa{\eqnode{{\eqnode{\mu X.~ \gamma~ \cup~ \eqnode{\alpha}_{\mathfrak{D}}^{\mathfrak{R}},~\alpha_2}} \triangleright \beta}}  =\newline \mbox{ }~\hspace{0.1cm}\left\{\begin{array}{ll}
          \sbullet[1.1] & \text{let const} = \gamma ~\text{in}~ \\
            & \quad \hm{\bigr[} ~~ \eqnode{\expand{\eqnode{\mu X.~ \text{const} ~ \cup~  \eqnode{\alpha}_{\mathfrak{D}}^{\mathfrak{R}},~\alpha_2}} \triangleright \expand{\beta}}, \\
            & \quad { \color{blue}\eqnode{\mu X'.~ \expand{\eqnode{\text{const} \triangleright \beta}} ~ \cup~  \eqnode{\expand{\subst{\alpha}{X}{X'}}}_{\mathfrak{D}}^{\mathfrak{R}} } }, \\
            & \quad \eqnode{ \expand{\eqnode{\alpha_2} } \triangleright \expand{\beta} } ~~ \hm{\bigr]}  \\
          & {when}~~ \type{\beta} \cap \mathfrak{D} = \emptyset \\ 
         \sbullet[1.1] & \hm{\bigr[} ~~  \expand{\eqnode{\mu X.~ \gamma ~ \cup~ \recursiveNode{\alpha},~\alpha_2}} \triangleright \expand{\beta}, \\
         & \quad \eqnode{\expand{\eqnode{\alpha_2}}\triangleright \expand{\beta}} ~~ \hm{\bigr]} \qquad {otherwise}  
        \end{array}\right.$

          \caption{Pushing antijoin in an equivalence node containing a fixpoint.}\label{fig:pushing-antijoinn}
        \endminipage
        %
      \end{footnotesize} 
      \end{figure} 

    \subsection{The overall expansion algorithm}
    \label{sec:expandfunction}
    We can now describe the overall \mudag{}  expansion algorithm. We define the function 
    $\expand{}$ that takes an equivalence node $\gamma$ and returns the equivalence node $\gamma'$ containing all the terms obtained by transformations. %
    The  $\expand{}$ function is defined as follows:
      \renewcommand\arraystretch{1.4} 
    $$\begin{footnotesize}\begin{array}{rcl}
      \expand{\eqnode{d}} & = & \applyAll{\eqnode{d}} \\
      \expand{\eqnode{d, \alpha}} & = & \applyAll{\eqnode{d}} \cup \expand{\alpha} \\
    \end{array}\end{footnotesize}$$
     where $\applyAll{}$ is in charge of applying all possible transformations  on each operation node. This includes applying all rewrite rules defined in Section~\ref{section: mudag-adapted-rewrite-rules} in combination with the more classical ones of relational algebra
     : 
      \renewcommand\arraystretch{1.4} 
     $$\begin{footnotesize}\begin{array}{rcl}
        \applyAll{\eqnode{d}} & = &  \pf{\eqnode{d}} \cup \pa{\eqnode{d}} \cup \pj{\eqnode{d}} \cup \mf{\eqnode{d}} \\
        && \cup~ \pp{\eqnode{d}} \cup \allCodd{\eqnode{d}}  \\
      \end{array}\end{footnotesize}$$
    where $\allCodd{}$ applies all rewrite rules concerning classical (non-recursive) relational algebra adapted for \mudag{}. For example: $\pfj{}$ for pushing filters in join operation nodes, $\paj{}$ for pushing antiprojections in a join operation node, $\jassoc{}$ for join associativity, $\dju{}$ for distributivity of join over unions, etc.
      \renewcommand\arraystretch{1.4} 
    $$\begin{footnotesize}\begin{array}{rcl}
      \allCodd{\eqnode{d}} & = & \pfj{\eqnode{d}} \cup \paj{\eqnode{d}} \cup \jassoc{\eqnode{d}} \cup ... \cup \dju{\eqnode{d}} 
     \end{array}\end{footnotesize}$$
    For instance, $\pfj{}$ is defined as shown in Fig.~\ref{fig:pushing-filter-joinn-2} for an \mudag{} in which a filter precedes an equivalence node which contains a join operation node. In other cases, $\pfj{}$ recursively traverses the structure with appropriate calls to $\expand{}$ in search for further transformation opportunities. Other rewrite rules of non-recursive relational algebra are also implemented in a similar way. 

  \begin{figure}
  \begin{footnotesize} 
    \minipage{0.5\textwidth}%
    \newcommand{\shortexpand}[1]{\texttt{exp}({#1})}
    \renewcommand\arraystretch{1.4 } 
    $\pfj{\eqnode{\filtaf{f}{\eqnode{\gamma_1 \bowtie \gamma_2, \alpha}}}}  =\newline \mbox{ }~\hspace{0.4cm}\left\{\begin{array}{ll}
      \sbullet[1.1] & \hm{\bigr[} ~ {\color{black}{\shortexpand{\eqnode{\filtaf{f}{\gamma_1}}} \bowtie \shortexpand{\gamma_2}},~\shortexpand{\eqnode{\filtaf{f}{\alpha}}} } ~ \hm{\bigr]}  \\
      & {when}~~  \filtColumns{f} \subseteq \type{\gamma_1} \land  \filtColumns{f} \nsubseteq \type{\gamma_2}  \\ 
     \sbullet[1.1] & \hm{\bigr[} ~ {\color{black}{\shortexpand{\gamma_1} \bowtie \shortexpand{\eqnode{\filtaf{f}{\gamma_2}}}},~\shortexpand{\eqnode{\filtaf{f}{\alpha}}} } ~ \hm{\bigr]}  \\
     & {when}~  \filtColumns{f} \nsubseteq \type{\gamma_1} \land  \filtColumns{f} \subseteq \type{\gamma_2}  \\ 
     \sbullet[1.1] & \hm{\bigr[} ~ {\color{black}{\shortexpand{\eqnode{\filtaf{f}{\gamma_1}}} \bowtie \shortexpand{\eqnode{\filtaf{f}{\gamma_2}}}},~\shortexpand{\eqnode{\filtaf{f}{\alpha}}} } ~ \hm{\bigr]}  \\
     & {when}~  \filtColumns{f} \subseteq \type{\gamma_1} \land  \filtColumns{f} \subseteq \type{\gamma_2}  \\ 

    \sbullet[1.1] & \hm{\bigr[} ~ \filtaf{f}{\shortexpand{\eqnode{\phi \bowtie \psi}},~\shortexpand{\eqnode{\filtaf{f}{\alpha}}}} ~ \hm{\bigr]}  ~~ otherwise
    \end{array}\right.$
    \caption{Pushing a filter in an equivalence node composed of at least one join operation node and other operation nodes (\shortexpand{} stands for \expand{}).}
    \label{fig:pushing-filter-joinn-2}
  \endminipage
  \end{footnotesize} 
\end{figure}

    \subsection{Correctness and completeness}
    \label{section:properties}

\begin{proprep}[Correctness]
    Let $\eqnode{\alpha}$ be a consistent \mudag{}, and $\alpha' = \expand{\eqnode{\alpha}}$, then we have $\sgamma{{\eqnode{\alpha}}}{\emptyset} \subseteq \sgamma{\alpha'}{\emptyset}$ and  $\alpha'$ is consistent. 
    \end{proprep}

    \begin{proof}[Proof]
        The proof is done by decomposition and induction. It is available in Appendix~A of the extended version\footnote{\url{https://doi.org/10.48550/arXiv.2312.02572}}.
    \end{proof}

     \begin{proprep}[Completeness properties]
        Let $R$ be a set of \mudag{} rewrite rules such that $R$ contains the 5 \mudag{} rewrite rules for recursive terms presented in Section~\ref{section: mudag-adapted-rewrite-rules}, we consider $\eqnode{\alpha}=\unfold{\expandR{R}{\eqnode{\alpha'}}}$ where $\alpha'$ is a consistent \mudag{}, and $\expandR{R}{}$ is the $\expand{}$ function in which rules in $R$ are activated. 
        The following properties hold:

        \begin{proptyrep}[All pushable filters have been pushed]~\\
            \begin{footnotesize}
            $\forall~ \filtaf{f}{\gamma} \in \eqnode{\alpha}$, $\nexists{} d \in \gamma \mid d=\mu X. ~ \eqnode{\kappa} \cup \eqnode{\alpha_2}_{\mathfrak{D}}^{\mathfrak{R}}$ and $\filtColumns{f} \cap \mathfrak{D} = \emptyset$. 
            \end{footnotesize}
            \end{proptyrep}
            \begin{proptyrep}[All pushable antiprojections pushed]~\\
                \begin{footnotesize}
                $\forall~ \antiprojection_{a}(\gamma) \in \eqnode{\alpha}$, $\nexists{} d \in \gamma \mid d=\mu X. ~ \eqnode{\kappa} \cup \eqnode{\alpha_2}_{\mathfrak{D}}^{\mathfrak{R}}$ and $a \not\in \mathfrak{R}$.
                \end{footnotesize}
            \end{proptyrep}
            \begin{proptyrep}[All pushable joins pushed]~\\
                \begin{footnotesize}
                $\forall~ ( {\beta} \bowtie {\gamma}) \in  \eqnode{\alpha}$, if $\exists d\in\gamma$ such that $d=\mu X. ~ \eqnode{\kappa} \cup \eqnode{\alpha_2}_{\mathfrak{D}}^{\mathfrak{R}}$ and $\type{\beta} \cap \mathfrak{D} = \emptyset$ and $\type{\beta} \backslash \type{\gamma} \thickspace  \cap  \mathfrak{R} = \emptyset$ then $\exists d' \in \eqnode{\alpha}$ such that $d'=\mu X'. ~ \eqnode{ \eqnode{\beta} \bowtie {\eqnode{\kappa}} } \cup \eqnode{\subst{\alpha_2}{X}{X'}}_{\mathfrak{D'}}^{\mathfrak{R'}}$.
                \end{footnotesize}
                \end{proptyrep}
                \begin{proptyrep}[All mergeable fixpoints merged]~\\
                    \begin{footnotesize}
                    $\forall~ ( {\gamma_1} \bowtie {\gamma_2}) \in  \eqnode{\alpha}$, if 
                     $\mu X_1. ~ \eqnode{\kappa_1} \cup \recursiveNodeAnot{\alpha_1}{D_1}{R_1} \in \gamma_1$ and $\mu X_2. ~ \eqnode{\kappa_2} \cup \recursiveNodeAnot{\alpha_2}{D_2}{R_2} \in \gamma_2$ and we have 
                     $\gamma_1 \neq \gamma_2$ and $(\type{\gamma_1} \cap \type{\gamma_2}) \cap (\mathfrak{D_1} \cup \mathfrak{D_2}) = \emptyset$  and  $\type{\gamma_1} \backslash \type{\gamma_2} \cap \mathfrak{R_2} = \emptyset $ and $\type{\gamma_2} \backslash \type{\gamma_1} \cap \mathfrak{R_1} = \emptyset $
                     then there exists $d \in  \eqnode{\alpha}$ such that $d=\mu X. ~ \eqnode{ \eqnode{\kappa_1} \bowtie {\eqnode{\kappa_2}} } \cup \recursiveNodeAnot{ \eqnode{\alpha_1} \cup \eqnode{\alpha_2}}{D}{R}$.
                    \end{footnotesize}
                \end{proptyrep}
                \begin{proptyrep}[All pushable antijoins pushed]~\\
                    \begin{footnotesize}
                    $\forall~ (\eqnode{\mu X.~ \gamma~ \cup~ \eqnode{\alpha}_{\mathfrak{D}}^{\mathfrak{R}}} \triangleright \beta) \in \eqnode{\alpha}$,
    if $\type{\beta} \cap \mathfrak{D} = \emptyset$ then $\exists d\in \eqnode{\alpha}$ such that $d=\eqnode{\mu X'.~ \eqnode{\gamma \triangleright \beta} ~ \cup~  \eqnode{\subst{\alpha}{X}{X'}}_{\mathfrak{D}}^{\mathfrak{R'}}}$
                    \end{footnotesize}
                \end{proptyrep}
        \end{proprep}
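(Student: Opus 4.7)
The plan is to establish each of the five completeness properties by structural induction over the \mudag{} term $\alpha'$, exploiting the mutual recursion between $\expand{}$ and $\applyAll{}$ together with consistency (Proposition~1). The common skeleton is: every operation node $d$ that appears in $\unfold{\expandR{R}{\eqnode{\alpha'}}}$ ultimately arose from a call $\applyAll{\eqnode{d_0}}$ for some ancestor $d_0$, hence every one of the five rules was invoked on $\eqnode{d_0}$. It then suffices to look at the definition of the corresponding rule (Fig.~\ref{fig:pushing-filterr}, \ref{fig:pushing-joinn}, \ref{fig:merging-fixpointss}, \ref{fig:pushing-antiprojectionn}, \ref{fig:pushing-antijoinn}) and check that, whenever the pattern and the side-condition in the statement of the property hold, the blue ``pushed/merged'' branch is produced and attached to the same equivalence node. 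The binder unfolding in $\unfold{\cdot}$ only renames references and does not affect the existence of these branches, so the added term is visible in $\eqnode{\alpha}$.

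For each property individually, I would first check the base case where the input equivalence node directly matches the syntactic shape in the premise (a filter over a fixpoint, a join with a fixpoint, two joined fixpoints, etc.): here the corresponding rule fires by its first clause and the required subterm is explicitly produced, using the fact that the annotations $\mathfrak{D}, \mathfrak{R}$ stored on the annotated equivalence node coincide with $\destab(\cdot,X)$ and $\strict(\cdot,X)$ by consistency, so the syntactic side-conditions on $\mathfrak{D}, \mathfrak{R}$ in the rule match the semantic side-conditions in the property. The inductive step covers the case where the matching pattern sits strictly under some operator: by the ``otherwise'' branches in the rule definitions, the subterms are recursively passed to $\expand{}$, and the induction hypothesis on the smaller \mudag{} yields the desired witness inside that subterm, which is then preserved in the enclosing structure.

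The main obstacle, and the step that deserves the most care, is closing the loop when a rule's output contains another opportunity that must itself be saturated. For instance, after merging two fixpoints (Property~4), the new recursive part $\subst{\alpha_1}{X_1}{X} \cup \subst{\alpha_2}{X_2}{X}$ is wrapped in a call to $\expand{}$, and the correctness part of Proposition~1 ensures that the freshly produced annotated equivalence node is still consistent with $\mathfrak{D}=\mathfrak{D_1}\cup\mathfrak{D_2}$ and $\mathfrak{R}=\mathfrak{R_1}\cup\mathfrak{R_2}$, since $\destab$ and $\strict$ are distributive over union. This invariance of consistency across rule applications is what lets the induction hypothesis apply uniformly to every newly generated term, and is what must be argued carefully to rule out the existence of a ``missed'' opportunity hiding inside an already-generated blue branch.

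Finally, termination of the saturation is a side concern rather than part of the logical statement, but to make the induction well-founded I would measure the input by the pair (number of fixpoint operation nodes, syntactic size of operation nodes outside fixpoints), noting that each of the five rules either strictly decreases this measure on its recursive subcalls or produces terms whose further expansion is already captured by a prior $\expand{}$ invocation recorded in the shared equivalence node; shared subterms are expanded only once thanks to the binder mechanism, which keeps the argument finite and avoids re-proving the same property on identical equivalence nodes.
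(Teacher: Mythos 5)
Your proposal is correct and matches the paper's own argument, which is given only as a short sketch: the paper phrases it as a proof by contradiction (a missed opportunity would mean an applicable rule was never invoked, contradicting the exhaustive traversal performed by $\expandR{R}{}$ via $\applyAll{}$), while you give the contrapositive as a direct structural induction over the case analyses in the rule definitions. The additional points you supply --- consistency of freshly generated annotations via Proposition~1 so that the syntactic side-conditions on $\mathfrak{D},\mathfrak{R}$ coincide with the semantic ones, and well-foundedness of the saturation --- are precisely the details the paper's sketch leaves implicit.
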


    \begin{proof}[Proof Sketch]
    These properties are proved by contradiction: (i) assuming the existence of a missed transformation opportunity in the expansion, which (ii) necessarily implies some unrealized rule application (whereas the rule was applicable), and (iii) showing that the systematic structure traversal performed by $\expandR{R}{}$ leaves no room for such a missed opportunity, thus (iv) leading to a contradiction.  
    \end{proof}

\subsection{Implementation}

We implemented the \mudag{} in Scala. It takes as input a recursive query, generates its plan space, selects a plan with best estimated cost, and then send it to PostgreSQL for evaluation. 

The system implementation is organized in two modules: the \texttt{enumerator} that computes the \mudag{} expansion, and the \texttt{cost estimator} in charge of (i) annotating equivalence nodes with estimated cardinalities (using base relations data statistics), and (ii) operation nodes with estimated costs (as described in \cite{lawal:tel-03322720,lawal}).

Our experiments focus on the evaluation of the efficiency of plan enumeration and the quality of the generated plan spaces. Therefore, we first expand, and then estimate costs.

Plan exploration is achieved by an implementation of the $\expand{}$ function presented in Section~\ref{sec:expandfunction}. This implementation relies on a top-down traversal of the \mudag{} structure using two nested loops and recursive calls to $\expand{}$. The purpose of the loops is to efficiently traverse the \mudag{} structure in order to trigger a rule whenever it is applicable. In order to test rule applicability, we use Scala's pattern matching on the structure exposed by the nested loops. There is thus a direct correspondence between the implementation and the formal presentation of the rules that use syntactic case by case decompositions. In the implementation, the application of rules follows an order determined by the depth-first exploration of the \mudag{}'s structure and the implicit order of operation nodes within equivalent nodes (ordered sets in Scala). 
In-memory unicity of operation and equivalence nodes is ensured. At each step of the expansion, equivalence nodes are unified as early as possible, so that duplicates are never created.
Without timeout, since  \mudag{} terms are finite and all recursive calls are done on strictly smaller subterms, the  expansion always terminates. 
Furthermore, all considered rules are deterministic and all applicable rules are always applied, in all possible sequences of application.
Therefore, for a given \mudag{}, the result of the expansion is always the same: the expansion is deterministic. 
In general, filtering expressions (predicates) can be arbitrarily complex. Filter pushdown thus requires a cost estimation in order to decide which is the best plan between the initial and the rewritten term. This is why, by default, rules $\pf{}$ and $\pp{}$ always preserve the initial term in the expansion (i.e. by default, $\replaceparam=$ false).

When filtering expressions are very simple (e.g. they can be evaluated in constant time), it makes sense to  consider only terms in which filtering expressions are pushed.
This simple heuristic (enabled by  setting $\replaceparam=$ true), used in experiments, allows to discard suboptimal plans during plan space exploration.

\section{Experimental Results}

\label{section:experimental_results}

We evaluate the \mudag{} experimentally. Our assessment is driven by the following research questions:

\begin{itemize}
  \item[\textbf{RQ1}]  How efficient is \mudag{} exploration of recursive plan spaces compared to the state-of-the-art? 
  
  \item[\textbf{RQ2}] How relevant are explorations of large recursive plan spaces for practical query evaluation?

\end{itemize}

In the following we first introduce the experimental setup and the experimental methodology with  chosen baselines and metrics. We then report on the results and the main lessons learned.

\subsection{Experimental setup}

\paragraph{Datasets} We consider various unmodified third-party datasets, graphs and trees, real and generated, as described in Table~\ref{table:datasets}.

\begin{small}
\renewcommand\arraystretch{1.1} 
\begin{table}[h]
  \begin{center}
    \begin{tabular}{|r|c|c|c|}  
     \hline
     Dataset & \#nodes \& \#edges & Type & Nature \\ [0.5ex] 
     \hline\hline
     Yago~\cite{yagooo} & 42M \& 62M & Knowledge graph & Real \\ 
     \hline
     Bahamas Leaks~\cite{bahamas-leaks} & 202K \& 249K & Property graph & Real \\
     \hline 
     Airbnb~\cite{chandan-sharma-airbnb,airbnb} & 24K \& 14K & Property graph & Real \\
     \hline
     LDBC~\cite{LDBC} & 908K \& 1.9M  & Property graph & Synthetic\\
     \hline
     Wikitree~\cite{wikitree} & 1.3M \& 9.1M & Tree & Real\\
     \hline
     Academic tree~\cite{academicTree} & 765K \& 1.5M  & Tree & Real\\
     \hline
    \end{tabular}
    \end{center} 
  \caption{Datasets (available from \cite{mulqdagpublicrepo}).}
  \label{table:datasets}
  \vspace{-0.9cm}
  \end{table} 
\end{small}
\paragraph{Queries} 
We consider a variety of recursive queries formulated against these datasets. Queries for Yago are mainly third-party regular path queries already considered in earlier papers in the literature\footnote{We consider 7 queries (Q1-Q7) taken from~\cite{tasweet-2017}, 2 queries (Q8-Q9) taken from~\cite{waveguide-2015},  (Q10-Q11) taken from~\cite{yago-queries-paper3} and (Q12-Q20) come from \cite{geneves-sigmod20}.}, and chosen because they are representative of the variety of possible recursive optimizations that can apply to them. Queries over the Airbnb dataset are inspired from \cite{chandan-sharma-airbnb}. We added more queries formulated over the Bahamas and LDBC datasets. We consider non-regular queries (variants of same generation, and a$^n$b$^n$) for the Wikitree and Academic Tree datasets. Queries and datasets used in experiments are available at \cite{mulqdagpublicrepo}.

\paragraph{Hardware and environment}
All experiments are conducted on a machine with an Intel Xeon Silver 4114 2.20GHz CPU, 192GB of RAM and 4TB 7200rpm SATA disks. The machine runs Ubuntu 20.4, Java 8 and the Scala 2.11 compiler with default parameters.

\subsection{Efficiency of plan space exploration}

To answer \textbf{RQ1}, we implemented a prototype of the \mudag{} and compare it with \muenum{}, which is the plan enumerator of the state-of-the-art $\mu$-RA system \cite{geneves-sigmod20}. As described in Section~\ref{section-related_work}, this system is of the most advanced relational-based system for recursive query optimization; providing the richest plan spaces for recursive terms. \muenum{} \cite{geneves-sigmod20} explores them using a state-of-the-art dynamic programming strategy, in which terms are made unique in memory so as to obtain very efficient term equality tests. 

 We measure the enumeration capability of the \mudag{} in terms of the number of plans explored per second for each query. We compare the performance of the \mudag{} implementation with the performance of \muenum{}. Figures~\ref{fig:yago-log}, \ref{fig:bahamas-log}, \ref{fig:airbnb-ldbc-log}, and Fig.~\ref{fig:non-reg-queries} respectively show the results obtained for the queries over each dataset. %
 \begin{figure}[h]
  \centering
    \includegraphics[width=0.9\linewidth,trim={0cm 0.8cm 0 0},clip]{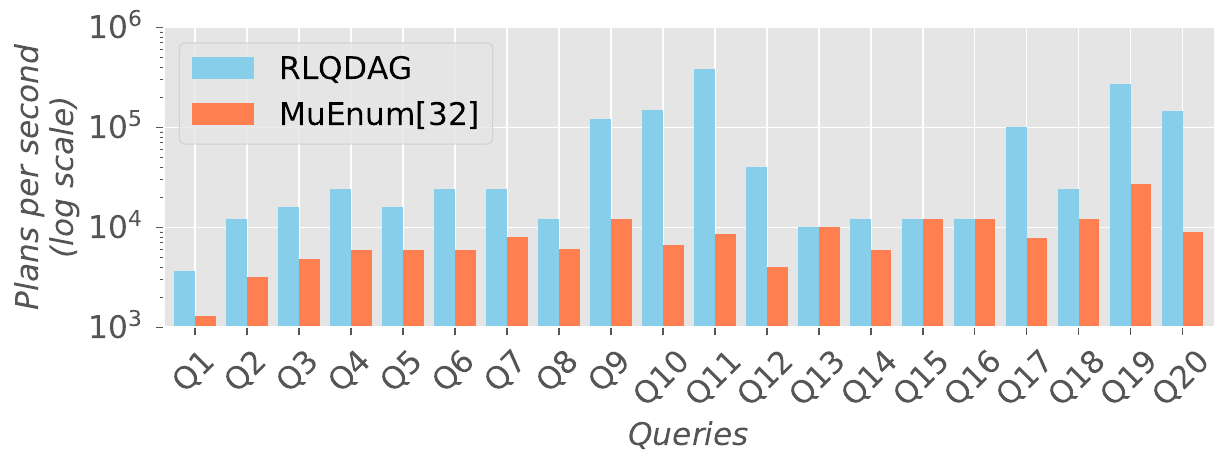}
   \vspace{-0.31cm}
   \caption{Yago queries.}\label{fig:yago-log}
   \vspace{-0.4cm}
\end{figure}
\begin{figure}[h]
  \centering
    \includegraphics[width=0.9\linewidth,trim={0cm 0.8cm 0 0},clip]{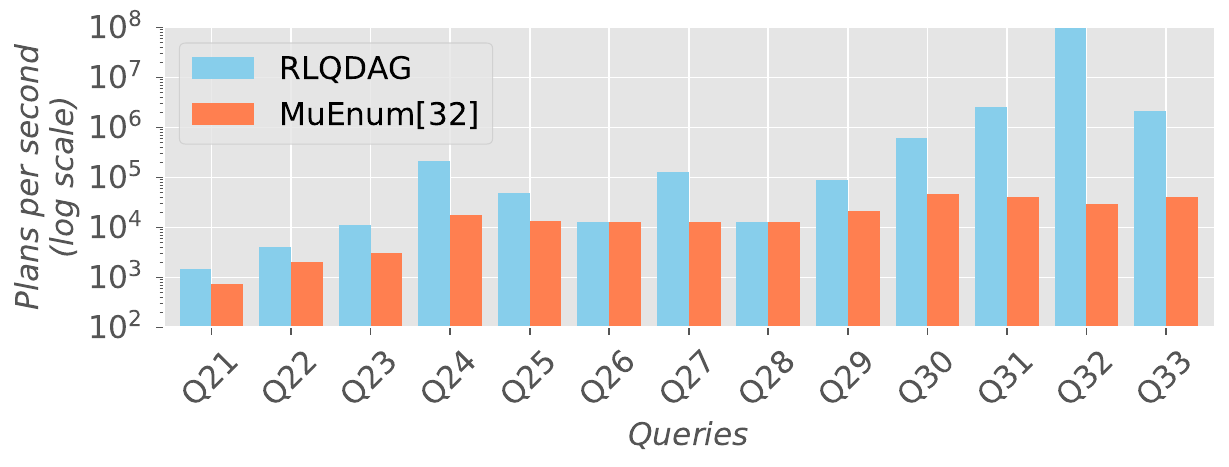}
    \vspace{-0.305cm}
    \caption{Bahamas Leaks queries.}\label{fig:bahamas-log}
    \vspace{-0.6cm}
  \end{figure}
  
\begin{figure}[h]
  \centering
\includegraphics[width=0.9\linewidth,trim={0cm 1cm 0 0},clip]{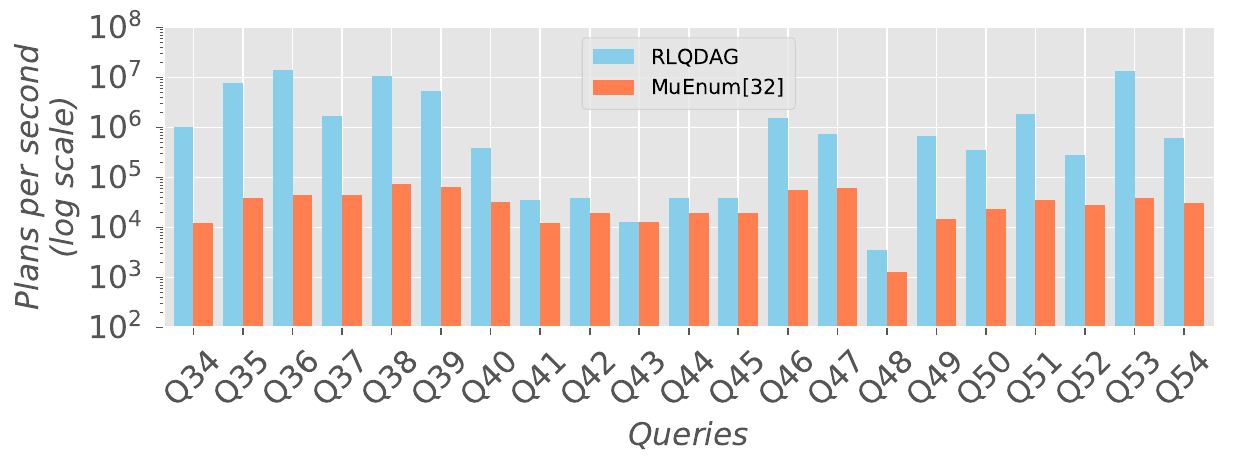}
\vspace{-0.3cm}
\caption{Airbnb and LDBC queries.}\label{fig:airbnb-ldbc-log}
\vspace{-0.4cm}
\end{figure}

\begin{figure}
  \centering
    \includegraphics[width=0.9\linewidth,trim={0cm 0.75cm 0 0},clip]{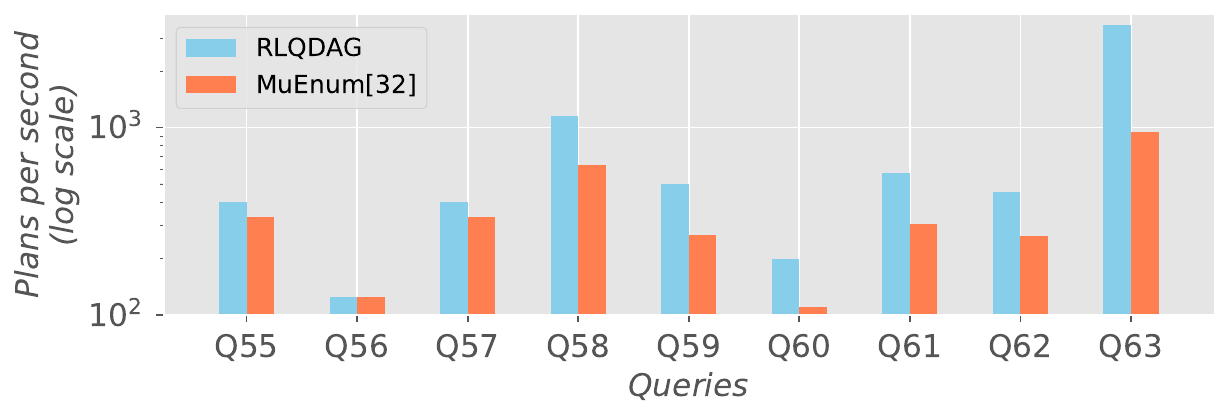}
    \vspace{-0.3cm}
    \caption{Non-regular queries.}\label{fig:non-reg-queries}
    \vspace{-0.4cm}
\end{figure}
In these figures, the $y$ axis (in log scale) indicates the number of plans per second explored by each approach for a given query (on the $x$ axis).
Results suggest that the \mudag{} approach always enumerates plans much faster (up to two orders of magnitude) when compared to \muenum{}\footnote{Cases where histogram bars look very similar correspond to situations where both systems completed the plan space exploration in a very short time duration, which makes the difference hardly visible with the log scale.}.

Now, we set a time budget $t$ (in seconds) for the plan space exploration and let the two systems generate plan spaces for that time budget. This means that after $t$ elapsed seconds we stop the two explorations and look at the plan spaces obtained by the two systems. Fig.~\ref{fig:stacked-plans}  shows the results obtained for a time budget of $t=0.5$ seconds with queries from Bahamas Leaks (Q31-Q32), Airbnb (Q34-Q37-Q38-Q39) and LDBC (Q51-Q53-Q54). The $y$ axis (in log scale) indicates the number of plans found. Fig.~\ref{fig:stacked-plans} also indicates the size of the complete (exhaustive) plan space obtained without any time restriction for the exploration ($t=\infty$). For example, for query Q31, the complete plan space contains more than 21.4 million plans. In 0.5 seconds, the \mudag{} prototype explored 1,019,026 plans whereas \muenum{} explored only 5,751 plans. This is because although \muenum{} uses dynamic programming techniques, it is not capable of benefitting from the \mudag{}'s grouping effect when applying complex rewrite rules on sets of recursive terms at once, thus rules are significantly more costly to apply. Seen from another perspective, this means that the \mudag{}'s approach is more effective in avoiding redundant subcomputations. We have conducted extensive experiments and overall results indicate that, for a given time budget, the \mudag{} prototype explores many more terms in comparison to \muenum{}, in all cases. In some cases, the \mudag{} generates a number of plans which is greater by up to two orders of magnitude (for the same time budget). Such a speedup sometimes enables a complete exploration of the whole plan space in some cases (as shown e.g. for Q34, Q51, Q53, Q54 in Fig.~\ref{fig:stacked-plans}).  

\begin{figure}[H] 
  \centering  \vspace{-0.2cm}
  \includegraphics[width=0.9\linewidth]{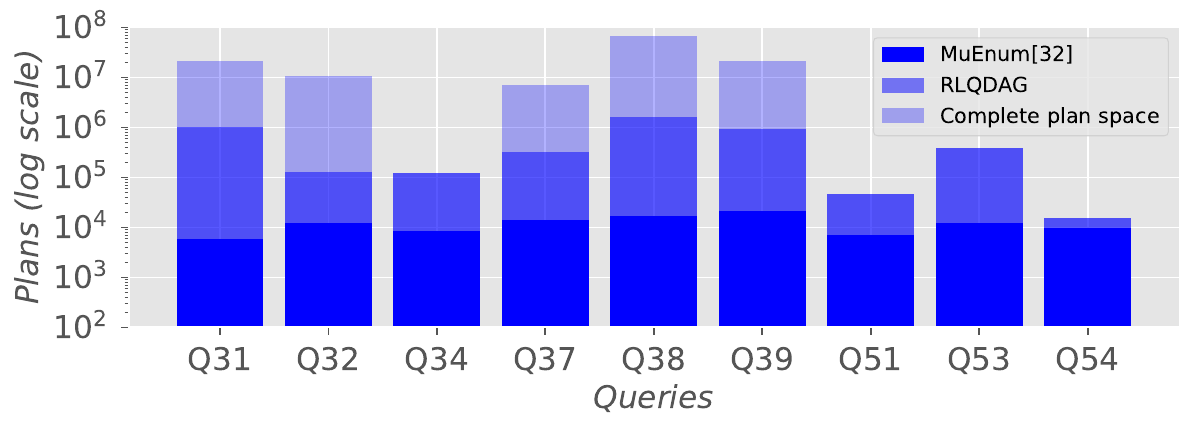} 
  \vspace{-0.3cm}
  \caption{Plan spaces explored in 500ms.}
  \label{fig:stacked-plans} 
  \vspace{-0.3cm}
\end{figure}

We now report on experiments of exploring plan spaces with varying and increasing time budgets for the same query. 
For instance, Fig.~\ref{subfig:q31-q53-different-times} presents the number of plans explored (on the $y$ axis) for different time budgets shown on the $x$ axis for query Q31 and query Q53 (2 queries from 2 different datasets). 
\begin{figure}[H]
  \centering
    \includegraphics[width=0.9\linewidth]{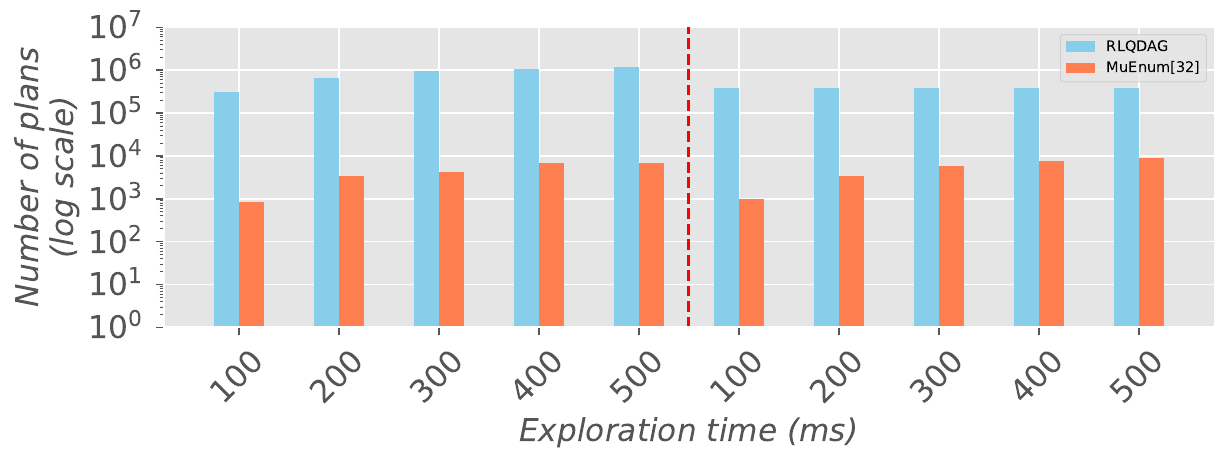}
    \vspace{-0.35cm}
    \caption{Plans explored per time budgets for Q31 and Q53.}\label{subfig:q31-q53-different-times}
    \vspace{-0.1cm}
\end{figure} 
Again, results shown in Fig.~\ref{subfig:q31-q53-different-times} indicate that the \mudag{} explores significantly more plans than the other approach for all considered time budgets (the $y$ axis is in log scale). We can also observe that the difference between the amount of plans explored by each system stays of the same order, even when exploration time increases.
 
\paragraph{\underline{Scalability with increasing query complexity}}
We now assess to which extent the approach scales with query complexity.
For that purpose, we consider a notion of recursive query complexity as the number of joined transitive closure relations. This slightly extends the usual notion of query complexity (traditionally measured as the number of joined relations) found in the litterature to also encompass recursion.  This accounts for the fact that both joins and recursions are complex operations, and also for the fact that they both significantly contribute to plan space size increase due to combinatorial explosions of equivalent plans (join ordering combined with rearranged recursions produce many more equivalent plans). 
Specifically, we consider a variable-length query which is a concatenation of transitive closures of relations:
$\mathcal{Q}_{r_i} = a_1^+/a_2^+/.../a_i^+$ for a given length $i$.
An increment in the query length increases both the number of recursions and joins, thus the query complexity, and therefore the size of the complete plan space. 
Specifically, $\mathcal{Q}_{r_i}$ contains $i$ recursions and $2i-1$ joins ($i-1$ joins at top level plus one join within each recursive part).

Figure~\ref{fig:scalability} shows a comparative analysis of plan enumeration speed of \mudag{} and \muenum{} when the query complexity increases. It shows the number of plans produced per millisecond (on the $y$ axis, in log scale) for a given query length (on the $x$ axis). %
Curves represent plan exploration speed. We observe that \mudag{}'s plan enumeration is always faster. Even more importantly, for the \mudag{} we observe an acceleration in enumeration, i.e. an increase of enumeration speed with respect to query complexity. The more complex is the query, the bigger are the complete plan space sizes, and the more \mudag{} becomes effective. This is because \mudag{}  transforms sets of terms of increasing sizes, which explains the progressive acceleration. \muenum{} explores the plan space term by term and hits a maximal exploration speed. In comparison, \mudag{}'s sharing and set-based exploration are clearly more effective. For example, for $\mathcal{Q}_{r_8}$ that contains 8 recursions and 7 top-level joins: \muenum{} explored 242K plans with a timeout of 10s, whereas \mudag{} explored 45M plans (2 orders of magnitude more).

\begin{figure}[H]
  \centering
  \vspace{-0.15cm}
    \includegraphics[width=0.93\linewidth]{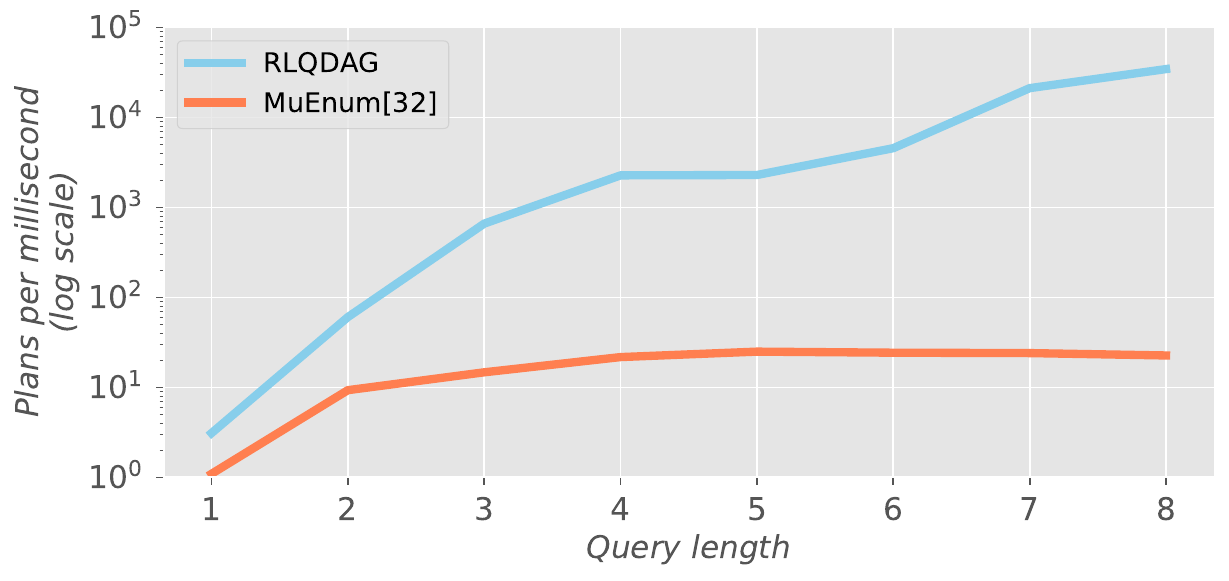}
    \vspace{-0.5cm}
    \caption{Enumeration with increasing query complexity.}\label{fig:scalability}
    \vspace{-0.3cm}
\end{figure} 

We now assess how relevant are faster explorations of larger plan spaces in terms of query evaluation performance.

\subsection{Relevance of large plan space exploration}

To answer \textbf{RQ2}, we use the same backend (PostgreSQL) in order to execute query plans generated by \muenum{} and \mudag{}.
For a given query, we set a time budget for plan space exploration, and we let \muenum{} and \mudag{} generate plan spaces for this same time budget.
Now, we pick the best estimated plan from each generated plan space. We use the same cost estimation \cite{lawal}, thus making relevant a head to head comparison. We measure and compare the times spent by PostgreSQL for evaluating the plans. %
\begin{figure}[h]
  \centering
  \includegraphics[width=0.9\linewidth]{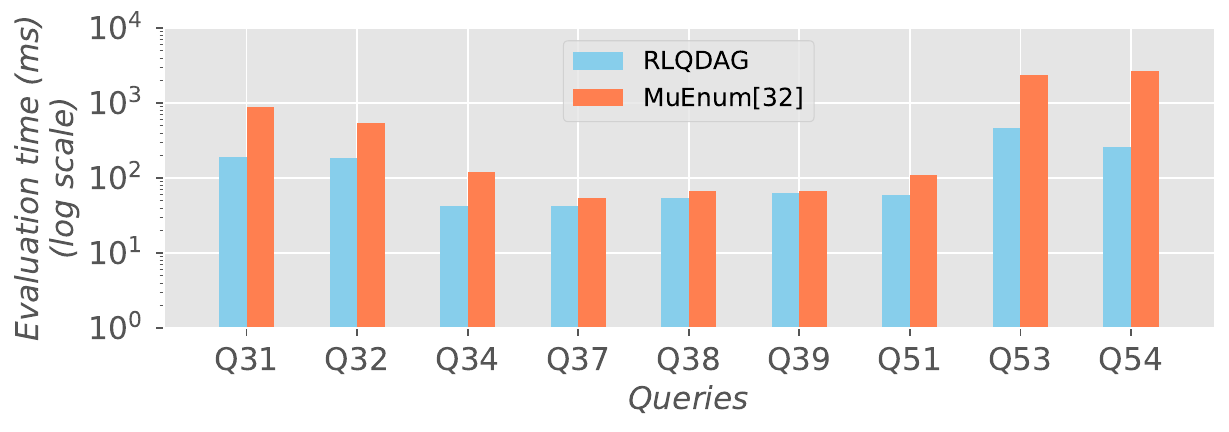} 
  \vspace{-0.4cm}
  \caption{Evaluating best estimated plans of spaces of Fig~\ref{fig:stacked-plans}.}
  \label{fig:eval-time} 
 \end{figure}  

For example, Fig.~\ref{fig:eval-time} illustrates the time spent in evaluating the best estimated plan taken from each of the explored plan spaces reported in Fig.~\ref{fig:stacked-plans}. Results show the benefits of exploring a much larger plan space: the \mudag{} approach always provides similar or better performance, which is a direct consequence of the availability of more efficient recursive plans in the larger plan space explored.

 \paragraph{\underline{Query running time with increasing exploration time budget}}
For a given query, we now inspect the impact on performance 
of increasing time budgets for plan space exploration. For that purpose, we trigger the plan space explorations with \mudag{} and \muenum{}  for different time budgets  ranging from 100ms to 500ms. 
We then measure the time spent in evaluating the best estimated plans taken from the corresponding plan spaces. We compare their respective performances. %
Fig.~\ref{subfig:q31-q53-evaluated-plans} shows the results obtained for queries Q31 and Q53. The $y$ axis shows the time spent (in log scale) in evaluating the best estimated plan obtained within the time budget shown on the $x$ axis. The sizes of corresponding plan spaces are given in Fig.~\ref{subfig:q31-q53-different-times}. 

For Q31, we observe that for both systems, the greater the exploration budget, the more efficient is the evaluation. We also observe that \mudag{} makes it possible to obtain more efficient plans much sooner (i.e. already with much smaller exploration time budgets). %
For Q53, \mudag{} is even more decisive since it explores the complete plan space in 100ms. The best possible estimated plan is thus already obtained, not needing any additional exploration budget.  In comparison, increasing time budgets for \muenum{} allows it to explore only small fractions of the whole plan space (as shown in Fig.~\ref{fig:stacked-plans}), failing to obtain a plan with better performance.

Overall, best estimated terms selected from larger plan spaces are more efficient. For any given exploration time budget, \mudag{} always produces more efficient terms than \muenum{}. 
This shows that in practice, larger plan spaces are very prone to contain more efficient recursive plans. This confirms the importance of efficiently exploring large recursive plan spaces.

\begin{figure}[h]
  \centering
  \vspace{-0.26cm}
  \includegraphics[width=0.9\linewidth]{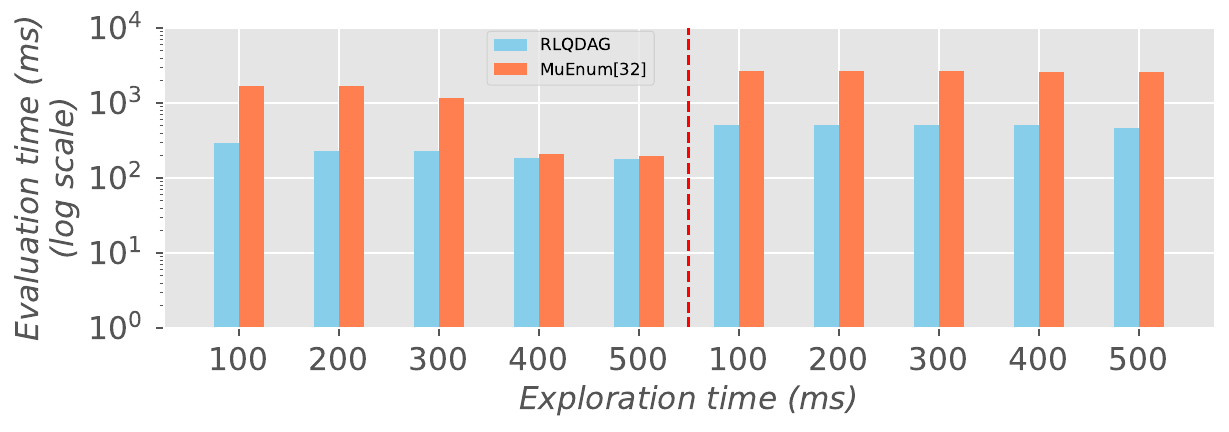}
  \vspace{-0.3cm}
  \caption{Evaluating best estimated plans of spaces of Fig~\ref{subfig:q31-q53-different-times}.}\label{subfig:q31-q53-evaluated-plans}
  \vspace{-0.26cm}
\end{figure}

Finally, we assess to which extent the availability of more relational algebraic plans can be useful in practice, when compared to other RDBMS and state-of-the-art approaches in graph query evaluation. For this purpose, we consider 2 engines based on relational algebra (\muenum{}~\cite{geneves-sigmod20} and Virtuoso~\cite{virtuoso} version 7.2.6.1), 3 native graph database engines (MilleniumDB~\cite{millenniumdb-vrgoc2021}, Neo4j~\cite{neo4j-program} version 4.4.11, Blazegraph~\cite{bigdata-blazegraph} version 2.1.6), and 3 plain-vanilla RDBMS capable of evaluating recursive SQL queries (PostgreSQL \cite{postgresql} version 15.1, MySQL \cite{mysql} version 8.1.0, SQLite \cite{sqlite} version 3.36.0), and one state-of-the-art Datalog engine (Souffl\'e \cite{souffle-cav16} version 2.4.1). We measure the time spent in query evaluation with each system.  
For each system, this includes the time spent in query optimization and the time spent in retrieving the whole set of query results. We set a timeout of 600s (10 min). Fig.~\ref{fig:comparisons-systems} shows the corresponding times spent for systems which were able to answer. If a system does not answer before the timeout, we consider that it is unable to answer, and it is absent from Fig.~\ref{fig:comparisons-systems} for the given query.

\begin{figure}[h]
    \centering
      \includegraphics[width=8.85cm, keepaspectratio,trim={0 0.8cm 0 0.3cm},clip]{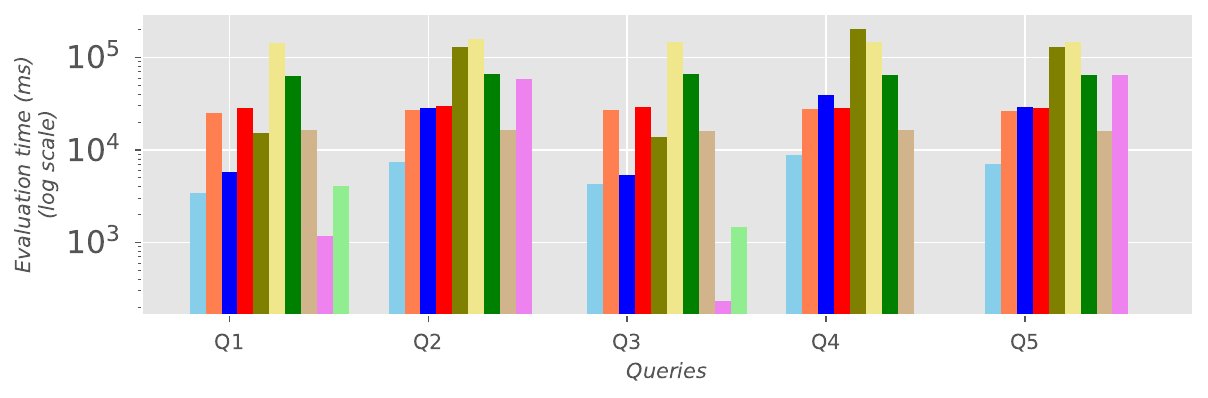}
      \includegraphics[width=8.85cm, keepaspectratio]{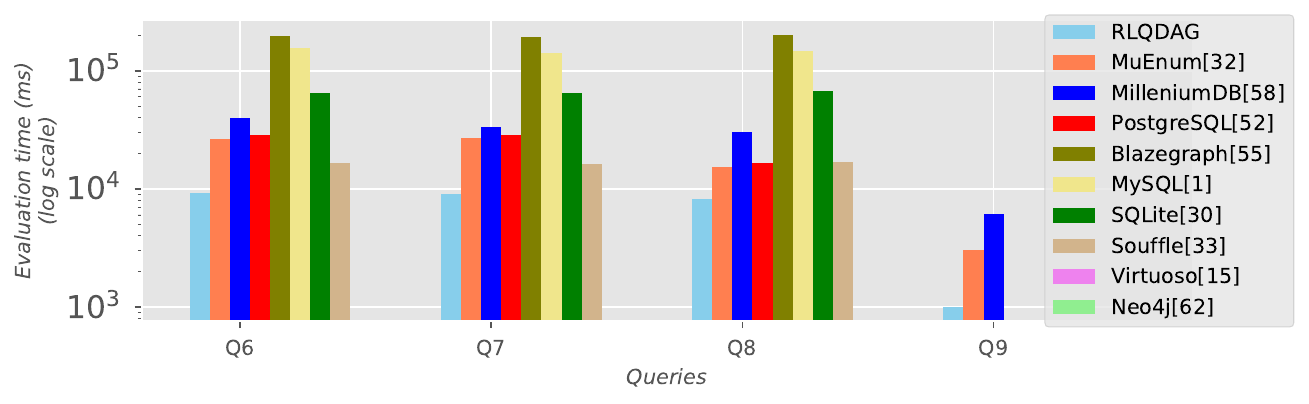}
      \vspace{-0.8cm}
      \caption{Comparative evaluation of third-party queries. 
      }\label{fig:comparisons-systems}
      \vspace{-0.4cm}
    \end{figure}

Results suggest that the availability of more plans -- thanks to the \mudag{} -- is beneficial to the relational-based approach.
More specifically, existing RDBMS consider recursive queries in a very restricted way. Most of them optimize recursion-free subexpressions, without being able to optimize the recursive part as a whole: they cannot make significant structural changes to the entire recursive query. This has an important consequence: they do explore much smaller plan spaces when compared to \mudag{}, potentially missing very efficient plans. In some cases, it even makes the relational-based approach more efficient than specialized graph engines. Likewise, Datalog engines like Souffl\'e are unable to generate many of the evaluation plans generated by the \mudag{}. For example they cannot merge recursions (as done in Section~\ref{sec:merging}). 

The plan spaces theoretically producible by \mudag{} and \muenum{} are the same. However, \muenum{} explores them term by term whereas \mudag{} explores them in a grouped manner which is much faster. This is because the compact representation of \mudag{} allows to transform sets of terms at once. As a consequence, \mudag{} makes it possible to explore in practice much larger portions of the theoretical plan space for the same time budget. 
Since transformations presented in Section~\ref{sec:transfo} are fully compositional, plans in which recursions are merged open further optimization opportunities. They unlock the exploration of many more plans in which for example joins and filters are pushed within merged recursions, etc. Such plans often happen to be much more efficient in practice.

\section{Conclusion and perspectives}
\label{section:conclusion}

We propose the \mudag{} for capturing and efficiently transforming sets of recursive relational terms. This is done by introducing annotated equivalence nodes, and a formal syntax and semantics for \mudag{} terms that enable the development of \mudag{} rewrite rules on a solid ground. \mudag{} rewrite rules transform sets of recursive terms while precisely describing how new subterms are created, attached, shared, and how new structural annotations are obtained with incremental updates. 
The proposed formalisation of the \mudag{} in terms of syntax and semantics provided a convenient -- if not instrumental -- means to develop transformations. It helps in defining expansions, and for detecting and fixing intricate transformational issues. We believe that this formalization can also serve in further extensions (such as groupBy and aggregations in the presence of recursion), thus contributing to the extensibility of the top-down transformational approach. 
Practical experiments with the \mudag{}  show the interest of exploring large plan spaces, and suggest that it represents an interesting foundation for efficiently enumerating recursive relational query plans. %

\newpage
\bibliographystyle{ACM-Reference-Format}
\bibliography{sample}

\newcommand{\proofpara}[1]{\vspace{0.5cm}\noindent\textbf{\underline{#1}}\vspace{0.3cm}\\}

\appendix
\section{Proof of Proposition 1} \label{appendix:proof1}

\setcounter{proprep}{0}
\begin{proprep}[Correctness]
    Let $\eqnode{\alpha}$ be a consistent \mudag{}, and $\alpha' = \expand{\eqnode{\alpha}}$, then we have $\sgamma{{\eqnode{\alpha}}}{\emptyset} \subseteq \sgamma{\alpha'}{\emptyset}$ and  $\alpha'$ is consistent. 
    \end{proprep}

    The proof of correctness is done by induction on the structure of $\alpha'$. For this purpose: (i) we focus on each newly created recursive term added by the expansion, (ii) we use the consistency hypothesis on the recursive term before expansion, and (iii) we show that the rearrangements made by the rewrite rules on this term preserve the correctness properties.

\emergencystretch 4em
\begin{proof}[Proof]
We break down the correctness proposition to three properties:
\begin{itemize} 
    \item[\textbf{(P1)}:] $\sgamma{{\eqnode{\alpha}}}{\emptyset} \subseteq \sgamma{\expand{\eqnode{\alpha}}}{\emptyset}$
    \item[\textbf{(P2)}:] $\expand{\eqnode{\alpha}}$ is well formed if $\eqnode{\alpha}$ is well formed
    \item[\textbf{(P3)}:] $\consppty{\expand{\eqnode{\alpha}}}$ if $\consppty{\eqnode{\alpha}}$
    where $\consppty{\eqnode{\alpha}}$ denotes "$\eqnode{\alpha}$ satisfies the property (2) of definition~\ref{def:consistency}"
\end{itemize}
In the following, we prove each of these properties.

\proofpara{Proof of property (P1).}
    We show this property by structural induction. Let $\eqnode{\alpha}$ be an RLQDAG, we suppose that every subterm $\eqnode{\alpha_\text{sub}}$ of $\eqnode{\alpha}$ verifies the following $\sgamma{\eqnode{\alpha_\text{sub}}}{E} \subseteq \sgamma{\expand{\eqnode{\alpha_\text{sub}}}}{E}$.
    It is sufficient to show that $\sgamma{{\eqnode{\alpha}}}{E} \subseteq \sgamma{\expand{\eqnode{\alpha}}}{E}$ for $\eqnode{\alpha} = \eqnode{d}$ because: 
    \begin{itemize}
        \item $\sgamma{\eqnode{\text{let Y } = \alpha_1 \text{ in } \alpha_2}}{E} = \sgamma{\eqnode{\alpha_2}}{E \cup \{ Y \mapsto  \alpha_1\}}$
        \item  $\sgamma{\eqnode{d, \alpha}}{E} = \sgamma{\eqnode{d}}{E} \cup \sgamma{\eqnode{\alpha_2}}{E}$
        \item $\expand{\eqnode{d, \alpha}} = \expand{\eqnode{d}}\cup \expand{\eqnode{\alpha}}$ (definition of expand~\ref{sec:expandfunction})
        \item So recursively, if the property is satisfied on $\eqnode{d}$ it is satisfied on $\eqnode{d1,d2,...}$
    \end{itemize}
    When $d$ is a leaf: $d = X$ (recursive variable) or $d = R$ (non recursive variable), we have $\expand{\eqnode{d}} = \eqnode{d}$, so the property holds.
    Otherwise, $d$ is the result of a unary operator: $d = \unaryop{\alphasub}$) or a binary operator: $d = \binaryop{\alphasub}{\alphasub'}$. In the following, we show the property for $d= \unaryop{\alphasub}$ (the proof for the the second case is identical).
    We can verify that for each rule $p$ of $\pf$, $\pp$, $\pa$, $\pj$, $\mf$, we have
    $p(\eqnode{\unaryop{\alphasub}}) = \eqnode{\unaryop{\expand{\eqnode{\alphasub}}}, ...}$. So by definition of $\sgamma{}{}$ and of $\expand{}$ we have,
    \begin{align*}
         \sgamma{\eqnode{\unaryop{\expand{\eqnode{\alphasub}}}}}{E} &\subseteq \sgamma{p(\eqnode{\unaryop{\alphasub}})}{E}  \\&\subseteq \sgamma{\expand{\eqnode{\unaryop{\alphasub}}}}{E} 
    \end{align*}
    So, $ \sgamma{\eqnode{\unaryop{\expand{\eqnode{\alphasub}}}}}{E} \subseteq \sgamma{\expand{\eqnode{d}}}{E} $ (1) \\
    We also have by definition of $\sgamma{}{}$
     $$\sgamma{\eqnode{\unaryop{\alphasub}}}{E} = \{\unaryop{t} | t \in \sgamma{\alphasub}{E}\}$$
    and $$\sgamma{\eqnode{\unaryop{\expand{\eqnode{\alphasub}}}}}{E} = \{\unaryop{t} | t \in \sgamma{\expand{\eqnode{\alphasub}}}{E}\}$$ 
    So, since $\sgamma{\eqnode{\alphasub}}{E} \subseteq \sgamma{\expand{\eqnode{\alphasub}}}{E}$ (induction hypothesis), we have $$\sgamma{\eqnode{d}}{E} = \sgamma{\eqnode{\unaryop{\alphasub}}}{E} \subseteq \sgamma{\eqnode{\unaryop{\expand{\eqnode{\alphasub}}}}}{E}$$ So $\sgamma{\eqnode{d}}{E} \subseteq \sgamma{\expand{\eqnode{d}}}{E}$ because (1).

    \proofpara{Proof of property (P2).}
    We define the property $\wellformed$ by the following:
    $$\alpha \text{ is } \wellformed~~ \text{ if }~~ \forall t, t' \in \salpha{\alpha}{E}, \quad \evaluation{t} =  \evaluation{t'}$$
    Let $\eqnode{\alpha}$ be an RLQDAG that is well formed. Here as well, it is sufficient to show that $\expand{\eqnode{\alpha}}$ is $\wellformed$ for $\eqnode{\alpha} = \eqnode{d}$ because for $\eqnode{\alpha} = \eqnode{d_1, d_2, ...}$:
    \begin{align*}
    \sbullet[1]~& \sgamma{\expand{\alpha}}{E} = \sgamma{\expand{\eqnode{d_1}}}{E} \cup \sgamma{\expand{\eqnode{d_2}}}{E} \cup ...\\
    \sbullet[1]~& \forall i, \forall t_i \in \sgamma{\eqnode{d_i}}{E}~\evaluation{t_i} = e \text{ for some fixed } e \\
     &\text{ because } t_i \in \sgamma{\eqnode{d}}{E} \text { and } \eqnode{d} \text{ is } \wellformed \\
    \sbullet[1]~& t_i \in \sgamma{\expand{\eqnode{d_i}}}{E} (\text{ because } (P1)) \\
    \sbullet[1]~& \text{So if }\sgamma{\expand{\eqnode{d_i}}}{E} \text { is } \wellformed, \forall s_i \in \sgamma{\expand{\eqnode{d_i}}}{E}, \\
    &\quad \evaluation{s_i}=e
    \end{align*}
    Let $d$ an operation node. In the following, we show that $\expand{\eqnode{d}}$ is $\wellformed$ when $\eqnode{d}$ is $\wellformed$.\\
    First, we can easily check that $\wellformed$ is satisfied on leaves ($d = X \text{ or } R$) because $\expand{\eqnode{d}} = \eqnode{d}$ and $\eqnode{d}$ is consistent.\\
    Second, we show by induction that $\expand{\eqnode{d}}$ is $\wellformed$, where $\eqnode{d}$ is $\wellformed$ and $d=\unaryop{\eqnode{\alpha}}$ or $d=\binaryop{\eqnode{\alpha}}{\eqnode{\alpha'}}$ for every operation node $\unaryop{}$ and $\binaryop{}{}$ and for any $\alpha$, $\alpha'$.\\
    We start by checking the base cases. When $\alpha$ is a leaf, we can easily check that $\expand{\eqnode{d}}$ is $\wellformed$ because we have $\expand{\eqnode{d}} = \eqnode{d}$ for all rewrite rules applied by $\expand{}$, except for the Codd join and union commutativity rules (ex. $\expand{\eqnode{A \NJoin B}} = \eqnode{A \NJoin B, B \NJoin A}$). In those cases as well, $\expand{\eqnode{d}}$ is $\wellformed$ because the new added term shares the same $\evaluation{}$ with the initial term. \\
    Then, we consider $\eqnode{\alpha}$, $\eqnode{\alpha'}$ two RLQDAGs and we suppose for that any subterm $\alpha_\text{sub}$ of $\alpha$ and any subterm $\alpha_\text{sub}'$ of $\alpha'$, we have the following: $\expand{\eqnode{\unaryop{\alpha_\text{sub}}}}$ and $\expand{\eqnode{\binaryop{\alpha_\text{sub}}{\alpha_\text{sub}'}}}$ are well formed when the terms inside $\expand{}$ are well formed, for every $\unaryop{}$ and for every $\binaryop{}{}$. We have:
    \begin{align*}
     \expand{\eqnode{d}} = &\pf{\eqnode{d}} ~\cup~ \pp{\eqnode{d}} ~\cup~ \pa{\eqnode{d}}\\
     & \cup~ \pj{\eqnode{d}} ~\cup~ \mf{\eqnode{d}} ~\cup~ \allCodd{\eqnode{d}}  
    \end{align*}
    We suppose that $\allCodd$ rules are correct, meaning that all terms $t_c \in \sgamma{\allCodd{\eqnode{d}}}{E}$ share the same $\evaluation{}$. They also share the same $\evaluation{}$ with the initial terms $t_i$ in $\sgamma{\eqnode{d}}{E}$. We have shown in the proof of $(P1)$ that for $t_i \in \sgamma{\eqnode{d}}{E}$, $t_i \in \sgamma{p{\eqnode{d}}}{E}$ for every fixpoint rule $p$ ($\pf{}$, $\pp{}$, $\pa{}$, $\pj{}$, $\mf{}$). This means that for $\expand{\eqnode{d}}$ to be $\wellformed$, it is sufficient to show that $p{\eqnode{d}}$ is $\wellformed$ for every fixpoint rule $p$. In the following, we show the proof for the rules $\pf$ and $\pj$. The proof for the rest of the rules is similar.\\
    We first start by showing the following property denoted by $\wfexpandp$:
    \begin{small}
    \begin{align*}
    &\forall \unaryop{}, \binaryop{}{}, \forall \beta, \beta' \text{where } \beta \text{ (resp. }\beta')\\
    & \qquad  \text{ is either }\alpha \text{ or any subterm of } \alpha \text{ (resp.} \alpha' \text{ or any subterm of }\alpha')) \\
    &\quad\unaryop{\eqnode{\beta}} \text{ is } \wellformed \implies \unaryop{\expand{\eqnode{\beta}}} \text{ is } \wellformed\\
    &\quad\binaryop{\eqnode{\beta}}{\eqnode{\beta'}} \text{ is } \wellformed \implies \binaryop{\expand{\eqnode{\beta}}}{\expand{\eqnode{\beta'}}} \text{ is } \wellformed
    \end{align*}
    \end{small}
    Let us suppose $\unaryop{\eqnode{\beta}}$ is $\wellformed$. We put $\eqnode{\beta} = \eqnode{d_1, d_2, ...}$.\\
    We have 
    \begin{align*}
    &\salpha{\unaryop{\expand{\beta}}}{E} =\\
    &\quad\salpha{\unaryop{\expand{\eqnode{d_1}}}}{E} \cup  \salpha{\unaryop{\expand{\eqnode{d_2}}}}{E} \cup ...
    \end{align*}
    Let $t_i \in \salpha{\unaryop{\eqnode{d_i}}}{E}$, we have $t_i \in \salpha{\unaryop{\eqnode{\beta}}}{E}$, and since $\eqnode{\unaryop{\eqnode{\beta}}}$ is $\wellformed$, then the terms $t_i$ share the same evaluation: $\evaluation{t_i} = e$. We also have, for each $i$, terms in $\salpha{\expand{\eqnode{d_i}}}{E}$ share the same $\evaluation{}$ $e_i$ (by the induction hypothesis on $d_i$ that is either a leaf or $\unaryop{\beta_{\text{sub}}}$ or $\binaryop{\beta_{\text{sub}}}{\beta_{\text{sub}}'}$). So terms in $\salpha{\unaryop{\expand{d_i}}}{E}$ share the same $\evaluation{}$ (application of $\unaryop{}$ on the common $e_i$). 
    Since $t_i$ also belongs to $\salpha{\unaryop{\expand{\eqnode{d_i}}}}{E}$ (because $(P1)$ and definition of $\salpha{}{}$), then all terms in $\salpha{\unaryop{\expand{\eqnode{d_i}}}}{E}$ share the same evaluation $e$ across all $i$. Hence $\salpha{\unaryop{\expand{\eqnode{\beta}}}}{E}$ is $\wellformed$.\\
    The proof of $\wfexpandp$ for the case of $\binaryop{}{}$ is identical.

    Now we show that $\pf{\eqnode{d}}$ and $\pj{\eqnode{d}}$ are $\wellformed$ when $\eqnode{d}$ is $\wellformed$.\\
    If $d$ is not a filter on a fixpoint (resp. if $d$ is not a join with a fixpoint), we have $\pf{\eqnode{d}}$ (resp. $\pj{\eqnode{d}}$) corresponds to the following:
    \begin{small}
    $
    \left\{\begin{array}{ll}
        & \unaryop{\expand{\eqnode{\alpha}}} \text{ when } d = \unaryop{\eqnode{\alpha}}\\ 
        &\binaryop{\expand{\eqnode{\alpha}}}{\expand{\eqnode{\alpha'}}} ~~ \text{ when } d = \binaryop{\eqnode{\alpha}}{\eqnode{\alpha'}}  
    \end{array}\right.$\\
    \end{small}
    So, in those cases, $\pf{\eqnode{d}}$ and $\pj{\eqnode{d}}$ are $\wellformed$ because $\wfexpandp$.\\
    If $d = \filt{\eqnode{\alpha}}$ where $\eqnode{\alpha}=\eqnode{\fixptdag{\gamma}{\alpha_\text{sub}}, \alpha_\text{sub}'}$ (we consider the more generic form where the fixpoint is not alone in an alternative), we have:
    \begin{small}
        $\pf{\eqnode{\filtaf{f}{\eqnode{\mu X.~ \gamma~ \cup~ \eqnode{\alpha_\text{sub}}_{\mathfrak{D}}^{\mathfrak{R}},~\alpha_\text{sub}'}}}}  =\newline
        \left\{\begin{array}{ll}
            \sbullet[1.1] &\hm{\bigr[} ~\pushedfb{}, ~~ \filt{\expand{\eqnode{\alpha}}}, ~~ \expand{\filt{\eqnode{\alpha_\text{sub}'}}} ~\hm{\bigr]}  \\
            &\quad \text{ when }~~  \filtColumns{f} \cap \mathfrak{D} = \emptyset \\ 
         \sbullet[1.1] &\hm{\bigr[} ~\filt{\expand{\eqnode{\alpha}}} ~, ~\ \expand{\filt{\eqnode{\alpha_\text{sub}'}}} ~\hm{\bigr]} ~~ \text{ otherwise}  
        \end{array}\right.$\\
        where:
        $\pushedfb{} = {\mu X'.~ \text{expand}(\eqnode{\filtaf{f}{\gamma}}) ~ \cup~ \eqnode{\expand{\subst{\alpha_\text{sub}}{X}{X'}}}_{\mathfrak{D}}^{\mathfrak{R}} }$\\
    \end{small}

Let $t \in \salpha{\alpha}{E}$ and $t' \in \salpha{\alpha_\text{sub}'}{E}$. We have $\filt{t} \in  \salpha{\filt{\eqnode{\alpha}}}{E}$ and $\filt{t'} \in  \salpha{\filt{\eqnode{\alpha_\text{sub}'}}}{E}$. We also have $\evaluation{\filt{t}} = \evaluation{\filt{t'}}$ because $\filt{t},\filt{t'} \in \salpha{d}{E}$ and $\eqnode{d}$ is $\wellformed$. Since $\eqnode{\filt{\expand{\eqnode{\alpha}}}}$ is $\wellformed$ because $\wfexpandp$ and $\expand{\filt{\eqnode{\alpha_\text{sub}'}}}$ is $\wellformed$ (induction hypothesis), then all terms in each of $\salpha{\filt{\expand{\eqnode{\alpha}}}}{E}$ and $\salpha{\expand{\filt{\eqnode{\alpha_\text{sub}'}}}}{E}$ evaluate to the same thing, and since, by $(P1)$, the first contains $\filt{t}$ and the second contains $\filt{t'}$ which share the same $\evaluation{}$, then all terms in $\salpha{\pf{\eqnode{d}}}{E}$ share the same $\evaluation{}$ (meaning $\pf{\eqnode{d}}$ is $\wellformed$) when $\filtColumns{f} \cap \mathfrak{D} \neq \emptyset$.\\
We have $\forall t'' \in \salpha{\pushedfb{}}{E},~~~ \exists t_\gamma \in \salpha{\gamma}{E}, t_\text{sub} \in \salpha{\alpha_\text{sub}}{E}$ such that $t'' = \fixpt{\filt{t_\gamma} \cup t_\text{sub}}$. We also have $$\evaluation{t''} = \evaluation{\filt{\fixpt{t_\gamma \cup t_\text{sub}}}} \text{ when } \filtColumns{f} \cap \mathfrak{D} = \emptyset~[32]. $$
Since $$\filt{\fixpt{t_\gamma \cup t_\text{sub}}} \in \salpha{\filt{\fixptdag{\gamma}{\alpha_\text{sub}}}}{E} \subset \salpha{d}{E}$$ then all terms in $\pushedfb{}$ share the same $\evaluation{}$ with the rest of the terms in $\pf{\eqnode{d}}$. Hence $\pf{\eqnode{d}}$ is $\wellformed$. 

If $d = \beta \NJoin \eqnode{\alpha}$ where $\eqnode{\alpha} = \eqnode{\fixptdag{\gamma}{\alpha_\text{sub}}, \alpha_\text{sub}'}$, we have\\
\begin{small}
    $\pj{\eqnode{\beta \NJoin \eqnode{\alpha} }} =\newline \mbox{ }~\hspace{0.3cm}\left\{\begin{array}{ll}
        \sbullet[1.1] & \eqnode{\text{let const} = \gamma ~\text{in}~
            \pushedfb{}  ~, \expand{\beta} \NJoin \expand{\eqnode{\alpha}}, \\
          & \quad \expand{\beta} \NJoin \expand{\eqnode{\alpha_\text{sub}'}} ~} \qquad \text{when } \condpj \text{ is true}\\  
       \sbullet[1.1] & \eqnode{\expand{\beta} \NJoin \expand{\eqnode{\alpha}}, \\
       & \quad \expand{\beta} \NJoin \expand{\eqnode{\alpha_\text{sub}'}} ~} \qquad \text{otherwise}  
      \end{array}\right.$\\
    where:
    $\pushedfb{} = \fixptdag{\expand{\eqnode{\beta \NJoin \text{const}}}}{\expand{\subst{\alpha_\text{sub}}{X}{X'}}}[\mathfrak{D}][\mathfrak{R}] $\\
    $\condpj = \type{\beta} \cap \mathfrak{D} = \emptyset ~and~ \type{\beta} \backslash \type{\gamma} \thickspace  \cap  \mathfrak{R} = \emptyset$ 
\end{small}

Let $t_\alpha \in \salpha{\alpha}{E}$, $t_\beta \in \salpha{\beta}{E}$ and $t_\alpha' \in \salpha{\alpha_\text{sub}'}{E}$. We have $t_\beta \NJoin t_\alpha \in  \salpha{\beta \NJoin {\eqnode{\alpha}}}{E}$ and $t_\beta \NJoin t_\alpha' \in  \salpha{\beta \NJoin \eqnode{\alpha_\text{sub}'}}{E}$. We also have $\evaluation{t_\beta \NJoin t_\alpha} = \evaluation{t_\beta \NJoin t_\alpha'}$ because $(t_\beta \NJoin t_\alpha)\in \salpha{d}{E}$ and $(t_\beta \NJoin t_\alpha') \in \salpha{\eqnode{d}}{E}$ and $\eqnode{d}$ is $\wellformed$. Since $\eqnode{\expand{\beta} \NJoin \expand{\eqnode{\alpha}}}$ is $\wellformed$ because $\wfexpandp$ and $\eqnode{\expand{\beta} \NJoin \expand{\eqnode{\alpha_\text{sub}'}}}$ is $\wellformed$ because $\wfexpandp$, then all terms in each of $\salpha{\expand{\beta} \NJoin \expand{\eqnode{\alpha}}}{E}$ and $\salpha{\expand{\beta} \NJoin \expand{\eqnode{\alpha_\text{sub}'}}}{E}$ evaluate to the same thing, and since, by $(P1)$, the first contains $t_\beta \NJoin t_\alpha$ and the second $t_\beta \NJoin t_\alpha'$, and these two terms share the same $\evaluation{}$, then all terms in $\salpha{\pj{\eqnode{d}}}{E}$ share the same $\evaluation{}$ (meaning $\pj{\eqnode{d}}$ is $\wellformed$) when the condition $\condpj$ is false.\\
We have $\forall t \in \salpha{\pushedfb{}}{E},~~~ \exists t_\beta \in \salpha{\beta}{E},~ t_\gamma \in \salpha{\gamma}{E},~ t_\text{sub} \in \salpha{\alpha_\text{sub}}{E}$ such that $t = \fixpt{t_\beta \NJoin t_\gamma \cup t_\text{sub}}$. We also have $\evaluation{t} = \evaluation{t_\beta \NJoin \fixpt{t_\gamma \cup t_\text{sub}}}$ when $\condpj$ is true~\cite{geneves-sigmod20}. Since $$t_\beta \NJoin \fixpt{t_\gamma \cup t_\text{sub}} \in \salpha{\beta \NJoin \fixptdag{\gamma}{\alpha_\text{sub}}}{E} \subset \salpha{d}{E}$$ then all terms in $\pushedfb{}$ share the same $\evaluation{}$ with the rest of the terms in $\pj{\eqnode{d}}$. Hence $\pj{\eqnode{d}}$ is $\wellformed$.

\proofpara{Proof of property (P3).}
    We first show the following:
     $$ \pconsexpand: \forall \eqnode{\alpha} \text{ such that } \consistent{\alpha}{X}{\mathfrak{D}}{\mathfrak{R}}, \consistent{\expand{\eqnode{\alpha}}}{X}{\mathfrak{D}}{\mathfrak{R}}$$
    Let us consider $\eqnode{\alpha} = \eqnode{d_1, d_2, ...}$. We have: \\
    $\expand{\eqnode{\alpha}} = \eqnode{\expand{\eqnode{d_1}}} \cup \eqnode{\expand{\eqnode{d_2}}} \cup ...$\\
    It is sufficient to show that $\consistent{\expand{\eqnode{d}}}{X}{\mathfrak{D}}{\mathfrak{R}}$ for every $\eqnode{d}$ verifying $\consistent{d}{X}{\mathfrak{D}}{\mathfrak{R}}$ because, for each $i$, terms in $\expand{\eqnode{d_i}}$ would share the same $\destab$ and $\strict$, and since terms in $d_i$ are found in $\expand{\eqnode{d_i}}$ $(P1)$, and they share the same $\destab$ and $\strict$ accross all $i$ (because terms in $d_i$ are in $\alpha$ and $\consistent{\alpha}{X}{\mathfrak{D}}{\mathfrak{R}}$), then $\expand{\eqnode{d_i}}$ would share the same $\destab$ and $\strict$ across all $i$. In the following, we prove that for all $\eqnode{d}$ such that $\consistent{d}{X}{\mathfrak{D}}{\mathfrak{R}}$, $\consistent{\expand{\eqnode{d}}}{X}{\mathfrak{D}}{\mathfrak{R}}$. 
    First, for $d$ a leaf ($X$ or $R$) $\expand{\eqnode{d}} = \eqnode{d}$, hence the property is satisfied.
    Second, we prove by structural induction that $\forall \alpha, \alpha' ~~~ \forall \unaryop{}, \binaryop{}{},$ if $\consistent{\eqnode{d}}{X}{\mathfrak{D}}{\mathfrak{R}}$, then $\consistent{\expand{\eqnode{d}}}{X}{\mathfrak{D}}{\mathfrak{R}}$ for $d=\unaryop{\alpha}$ and $d=\binaryop{\alpha}{\alpha'}$. \\
    We show the proof for $d= \filt{\alpha}$ and the proof for the other operators is similar.\\
    The property is satisfied for $\alpha$ a leaf, because $\expand{\eqnode{d}} = \eqnode{d}$.\\ \emergencystretch 5em
    Let $\alpha$ such that $\consistent{\filt{\eqnode{\alpha}}}{X}{\mathfrak{D}}{\mathfrak{R}}$. We show that $\consistent{\expand{\filt{\eqnode{\alpha}}}}{X}{\mathfrak{D}}{\mathfrak{R}}$.\\
    We have $\expand{\filt{\eqnode{\alpha}}} = \pf{\eqnode{\alpha}} \cup \allCodd{\filt{\eqnode{\alpha}}}$. In the following, we show that $\consistent{\pf{\filt{\eqnode{\alpha}}}}{X}{\mathfrak{D}}{\mathfrak{R}}$ and we can similarly verify that every rule $p$ in $\allCodd{}$ also satisfies $\consistent{p(\filt{\eqnode{\alpha}})}{X}{\mathfrak{D}}{\mathfrak{R}}$.
    If $\alpha$ is not a fixpoint, $\pf{\filt{\eqnode{\alpha}}} = \filt{\expand{\eqnode{\alpha}}} = \filt{\expand{\eqnode{d_1}}} \cup \filt{\expand{\eqnode{d_2}}},...$ for $\eqnode{\alpha} = \eqnode{d_1, d_2, ...}$.
    We have, for each $i$, terms in $\expand{\eqnode{d_i}}$ share the same $\destab$ and $\strict$ (induction hypothesis with $d_i$ is either $\unaryop{\alpha_\text{sub}}$ or $\binaryop{\alpha_\text{sub}}{\alpha_\text{sub}'}$), which means that, for each $i$, terms in $\filt{\expand{\eqnode{d_i}}}$ share the same $\destab$ $\mathfrak{D_i}$ and $\strict$ $\mathfrak{R_i}$. Since terms in $\filt{d_i}$ are found in $\filt{\expand{\eqnode{d_i}}}$ $(P1)$, and they share the same $\destab$ $\mathfrak{D}$ and $\strict$ $\mathfrak{R}$ accross all $i$ (because terms in $\filt{d_i}$ are in $\filt{\alpha}$ and $\consistent{\filt{\eqnode{\alpha}}}{X}{\mathfrak{D}}{\mathfrak{R}}$), then $\filt{\expand{\eqnode{d_i}}}$ share the same $\destab$ $\mathfrak{D}$ and $\strict$ $\mathfrak{R}$ across all $i$. Hence $\consistent{\pf{\filt{\eqnode{\alpha}}}}{X}{\mathfrak{D}}{\mathfrak{R}}$. \\
    Otherwise, if $\alpha$ is a fixpoint we have \\
    \begin{small}
    $\pf{\eqnode{\filtaf{f}{\eqnode{\mu Z.~ \gamma~ \cup~ \eqnode{\alpha_\text{sub}}_{\mathfrak{D'}}^{\mathfrak{R'}},~\alpha_\text{sub}'}}}}  =\newline
    \left\{\begin{array}{ll}
        \sbullet[1.1] &\hm{\bigr[} ~\filt{\expand{\eqnode{\alpha}}}, ~~ \pushedfb{}, ~~  \expand{\filt{\eqnode{\alpha_\text{sub}'}}} ~\hm{\bigr]}  \\
        &\quad\text{ when }~~  \filtColumns{f} \cap \mathfrak{D'} = \emptyset \\ 
     \sbullet[1.1] &\hm{\bigr[} ~\filt{\expand{\eqnode{\alpha}}} ~, ~\ \expand{\filt{\eqnode{\alpha_\text{sub}'}}} ~\hm{\bigr]} ~~ \text{ otherwise}  
    \end{array}\right.$\\
    where:
    $\pushedfb{} = {\mu Z'.~ \text{expand}(\eqnode{\filtaf{f}{\gamma}}) ~ \cup~ \eqnode{\expand{\subst{\alpha_\text{sub}}{Z}{Z'}}}_{\mathfrak{D'}}^{\mathfrak{R'}} }$\\
    \end{small}
    Using the same reasoning used for the case of $\alpha$ not a fixpoint as well as the induction hypothesis on $\filt{\eqnode{\alpha_\text{sub}'}}$, we can show that $\consistent{\filt{\expand{\eqnode{\alpha}}}}{X}{\mathfrak{D}}{\mathfrak{R}}$ and $\consistent{\expand{\filt{\eqnode{\alpha_\text{sub}'}}}}{X}{\mathfrak{D}}{\mathfrak{R}}$. We next show $\consistent{\pushedfb{}}{X}{\mathfrak{D}}{\mathfrak{R}}$.\\
    Let $R_\gamma = \strict(\gamma, X)$ and $R_{\alpha_\text{sub}} = \strict(\alpha_\text{sub}, X)$. \\
    By the induction hypothesis, we have $$\strict(\expand{\filt{\gamma}}, X) = \strict(\filt{\gamma}, X) = \filtColumns{f} \cup R_\gamma$$ 
    Using a reasoning similar to the one used for the case of $\alpha$ not a fixpoint, we can show that $\strict(\expand{\eqnode{\alpha_\text{sub}}}, X) = R_{\alpha_\text{sub}}$.
    So $\strict(\pushedfb{}) = \filtColumns{f} \cup R_\gamma \cup R_{\alpha_\text{sub}}$. We have, by definition of $\strict$, $\strict(\filt{\fixptdag[Z]{\gamma}{\alpha_\text{sub}}[\mathfrak{D'}][\mathfrak{R'}]}, X) = \filtColumns{f} \cup R_\gamma \cup R_{\alpha_\text{sub}}$. So  $\strict(\pushedfb, X) = \strict(\filt{\fixptdag[Z]{\gamma}{\alpha_\text{sub}}[\mathfrak{D'}][\mathfrak{R'}]}, X)$. So $\strict(\pushedfb, X) = \mathfrak{R}$ because terms in $\filt{\fixptdag[Z]{\gamma}{\alpha_\text{sub}}[\mathfrak{D'}][\mathfrak{R'}]}$ are in $\filt{\eqnode{\alpha}}$ and $\consistent{\filt{\eqnode{\alpha}}}{X}{\mathfrak{D}}{\mathfrak{R}}$. Similarly, $$\mathfrak{D}=\destab(\filt{\fixptdag[Z]{\gamma}{\alpha_\text{sub}}[\mathfrak{D'}][\mathfrak{R'}]}, X) = \emptyset = \destab(\pushedfb{}, X)$$
    We have shown that each of $\pushedfb{}$, $\filt{\expand{\eqnode{\alpha_\text{sub}}}}$, and $\expand{\filt{\eqnode{\alpha_\text{sub}'}}}$ have $\mathfrak{R}$ as $\strict$ and $\mathfrak{D}$ as $\destab$. Hence, $\consistent{\pf{\eqnode{\filt{\eqnode{\alpha}}}}}{X}{\mathfrak{D}}{\mathfrak{R}}$.

   We now prove $(P3)$. For $\alpha = \eqnode{d_1, d_2, ...}$,  $\expand{\eqnode{\alpha}} = \eqnode{\expand{\eqnode{d_1}} \cup \expand{\eqnode{d_2}}, ...}$. So it is sufficient to show $\consppty{\expand{\eqnode{d}}}$ for any $\eqnode{d}$ satisfying $\consppty{\eqnode{d}}$ (because any fixpoint in $\expand{\eqnode{\alpha}}$ is found in one of $\expand{\eqnode{d_i}}$).
    We have $\expand{\eqnode{d}} = \applyAll{d}$. So it is sufficient to show, for every rule, that any term $\eqnode{\alpha}_{\mathfrak{D}}^{\mathfrak{R}}$ produced by this rule satisfies $\consistent{\alpha}{X}{\mathfrak{D}}{\mathfrak{R}}$. The only rules that produce new terms of the shape $\eqnode{\alpha}_{\mathfrak{D}}^{\mathfrak{R}}$ are the fixpoint rules ($\pf$, $\pa$, $\pp$, $\pj$, $\mf$). We can verify that all these rules only produce the following such terms:
    \begin{itemize}
    \item $\anneqnode{\alpha}$ that already exists in $\eqnode{d}$ which already satisfies the property.
    \item $\anneqnode{\expand{\eqnode{\alpha}}}[\mathfrak{D}][\mathfrak{R}]$ from a fixpoint $\fixptdag{\gamma}{\alpha}$. We have $\consistent{\alpha}{X}{\mathfrak{D}}{\mathfrak{R}}$, so $\consistent{\expand{\alpha}}{X}{\mathfrak{D}}{\mathfrak{R}}$ $\pconsexpand$.
    \item $\anneqnode{\expand{\eqnode{\alpha_1 \cup \alpha_2}}}[\mathfrak{D}_1 \cup \mathfrak{D}_2][\mathfrak{R}_1 \cup \mathfrak{R}_2]$ from from two fixpoints: $\fixptdag{\gamma}{\alpha}[\mathfrak{D}_1][\mathfrak{R}_1]$ and $\fixptdag{\gamma'}{\alpha'}[\mathfrak{D}_2][\mathfrak{R}_2]$. We have $\consistent{\alpha}{X}{\mathfrak{D}_1}{\mathfrak{R}_1}$ and $\consistent{\alpha'}{X}{\mathfrak{D}_2}{\mathfrak{R}_2}$, so $\consistent{\alpha \cup \alpha'}{X}{\mathfrak{D}_1 \cup \mathfrak{D}_2}{\mathfrak{R}_1 \cup \mathfrak{R}_2}$, so $\consistent{\expand{\alpha \cup \alpha'}}{X}{\mathfrak{D}_1 \cup \mathfrak{D}_2}{\mathfrak{R}_1 \cup \mathfrak{R}_2}$ $\pconsexpand$.
    \end{itemize}

\end{proof}

\end{document}